\newtheoremstyle{mythm}{3pt}{3pt}{}{16pt}{\bfseries}{:}{.5em}{}
\theoremstyle{mythm}
\newtheorem{theorem}{Theorem}
\newtheorem{example}{Example}
\newtheorem{definition}{Definition}
\newtheorem{remark}{Remark}
\newtheorem{proposition}{Proposition}
\newtheorem{lemma}{Lemma}
\newtheorem{construction}{Construction}
\newcommand{\tabincell}[2]{\begin{tabular}{@{}#1@{}}#2\end{tabular}}
\begin{document}
\title{A framework of constructing placement delivery arrays for centralized coded caching
\author{Minquan Cheng, Jinyu Wang, Xi Zhong, Qiang Wang}
%\thanks{M. Cheng is with Guangxi Key Lab of Multi-source Information Mining $\&$ Security, Guangxi Normal University,
%Guilin 541004, China (e-mail: $\{$chengqinshi,jjiang2008$\}$@hotmail.com).}
}

\date{}
\maketitle

\begin{abstract}
In caching system, it is desirable to design a coded caching scheme with the transmission load $R$ and subpacketization $F$ as small as possible, in order to improve efficiency of transmission in the peak traffic times and to decrease implementation complexity. Yan et al. reformulated the centralized coded caching scheme as designing a corresponding $F\times K$ array called placement delivery array (PDA), where $F$ is the subpacketization and $K$ is the number of users. Motivated by several constructions of PDAs, we introduce a  framework for constructing PDAs, where each row is indexed by a row vector of some matrix called row index matrix and each column's index is labelled by an element of a direct product set. Using this framework, a new scheme is obtained, which can be regarded as a generalization of some previously known schemes. When $K$ is equal to ${m\choose t}q^t$ for positive integers $m$, $t$  with $t<m$ and $q\geq 2$, we show that the row index matrix must be an orthogonal array if all the users have the same memory size. Furthermore, the row index matrix must be a covering array if the coded gain is ${m\choose t}$, which is the maximal coded gain under our framework.
Consequently the lower bounds on the transmission load and subpacketization of the schemes are derived under our framework. Finally, using orthogonal arrays as the row index matrix, we obtain two more explicit classes of schemes which have significantly advantages on the subpacketization while the transmission load is equal or close to that of the schemes constructed by Shangguan et al. (IEEE Trans. Inf. Theory, 64, 5755-5766, 2018) for the same number of users and memory size.

\end{abstract}

\begin{IEEEkeywords}
Coded caching scheme, placement delivery array, transmission load, subpacketization, orthogonal array, MDS code.
\end{IEEEkeywords}
\section{Introduction}
\IEEEPARstart{T}{he} wireless network has been imposed a tremendous pressure on the data transmission during the peak traffic times due to the explosive increasing mobile services, especially the video streaming. Caching system has been recognized as an efficient solution to reduce such pressure. In a caching system, some contents are proactively placed into the users' memory during the off peak traffic times. Then the traffic amount could be reduced when the cached content is required by users during the peak traffic times. So traditionally
almost all of studies on caching system focused on exploiting the history or statistics of the user demands for an appropriate caching strategy. Maddah-Ali and Niesen \cite{MN} showed that coded caching can further reduce the traffic amount during the peak-traffic times by exploiting caches to create multicast opportunities.
\subsection{Centralized caching system}
In a centralized $(K,M,N)$ caching system (see Figure \ref{fig-origin-system}), a single server containing $N$ independent files with the same length connects to $K$ users over a shared link and each user has a cache memory of size $M$ files with $N \geq K$ and $N\geq M$. Denote the $N$ files by $\mathcal{W}=\{W_0,\ldots,W_{N-1}\}$ and $K$ users by $\mathcal{K}=\{0,1,\ldots,K-1\}$. For such a system, an $F$-division $(K,M,N)$ coded caching scheme consists of two phases as follows \cite{MN}:
\begin{itemize}
\item {\bf Placement phase:} During the off peak traffic times, each file is divided into $F$ equal packets\footnote{\label{foot-meory-sharing}Memory sharing technique may lead to non equally divided packets \cite{MN}, in this paper, we will not discuss this case.}, i.e., $W_{i}=\{W_{i,j}:j=0,1,\ldots,F-1\}$. Then each user caches some packets (or linear combinations of packets) from the server. If packets are cached directly, it is called uncoded placement; if linear combinations of packets are cached, we call it coded placement. Denote $\mathcal{Z}_k$ the contents cached by user $k$. In this phase we assume that the server does not know the users' requests in the following phase. For simplicity, most known results are proposed under the assumption of {\em identical uncoded caching policy}, i.e., if some user caches the $i$-th packet of some file, then it caches the $i$-th packet of all files.

\item {\bf Delivery phase:} During the peak traffic times, each user randomly requests one file from the files set $\mathcal{W}$ independently. The request vector is denoted by $\mathbf{d}=(d_0,\cdots,d_{K-1})$, i.e., user $k$ requests the $d_k$-th file $W_{d_k}$, where $d_k\in[0,N)$ and $k\in\mathcal{K}$. The server broadcasts a coded signal (XOR of some required packets) of size $S_{{\bf d}}$ packets to users such that each user is able to recover its requested file with the help of its cache contents.
\end{itemize}
\begin{figure}[h]
\centering
\includegraphics[width=3in]{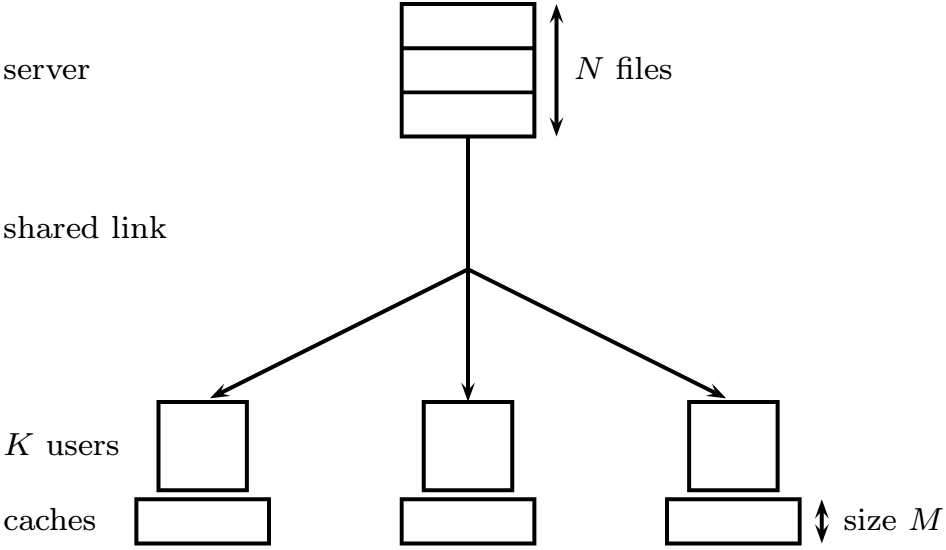}
\vskip 0.2cm
\caption{Centralized $(K,M,N)$ caching system}\label{fig-origin-system}
\end{figure}
In this paper, we focus on the worst-case demand pattern, i.e., all the users require different files. In this case,  the transmission load of a coded caching scheme is defined as the maximal normalized transmission amount among all the requests in the delivery phase, i.e.
$$R=\max_{{\bf d}\in\{0,\ldots,N-1\}^K }\left\{\frac{S_{{\bf d}}}{F}\right\}.$$ Since the implementation complexity of a coded caching scheme increases along with its subpacketization $F$, it is desirable to design a scheme with the transmission load and the subpacketization as small as possible.
\subsection{Prior work}
In this paper, we are interested in coded caching schemes with uncoded placement for a centralized caching system. There are many related studies such as \cite{MN,AG,GR,WTP,YMA,JCLC,STC,TC,SJTLD,CJTY,YCTC,TR,SZG,STD,YTCC,K,ASK,CJWY}. These results focus on either the transmission load or the subpacketization.

If each file in the library is large enough and $\frac{KM}{N}$ is an integer, Maddah-Ali and Niesen \cite{MN} proposed the first coded caching scheme with subpacketization $F={K\choose KM/N}$ and obtained a lower bound on the transmission load. This scheme is referred to as the MN scheme, which is listed in the second row of Table \ref{tab-known}. In \cite{GR}, an improved lower bound of the transmission load was derived  by Ghasemi and  Ramamoorthy using a combinatorial problem of optimally labeling the leaves of a directed tree. Wan et al. \cite{WTP} showed that the MN scheme has minimum transmission load by the graph theory when $K<N$; Yu et al. \cite{YMA} obtained the transmission load for various demand patterns by modifying the delivery phase of the MN scheme. There are some studies focusing on the transmission load under other conditions. For example, Jin. et al. \cite{JCLC} derived the average transmission load for conditions of non-uniform file popularity and various demand patterns respectively, using an optimization framework under a specific caching rule. When $K<N$ and all files have the same popularity, they also showed that the minimum transmission load is exactly that of the MN scheme.

It is well known that there is a tradeoff between $R$ and $F$. To the best of our knowledge, most known results focusing on the subpacketization are proposed under identical uncoded caching policy. When $K$ is large, the first tradeoff between $F$ and $R$ was derived by Shanmugam et al. in \cite{SJTLD} using probabilistic arguments. For an abitrary positive integer $K$, it is  still hard to derive the tradeoff between $F$ and $R$. Hence most  previously known constructions focus on reducing the subpacketization. In \cite{CJTY} Cheng et al. showed that, given the minimum transmission load, the minimum subpacketization is $F={K\choose KM/N}$, i.e., the subpacketization of the MN scheme. It is easy to see that the subpacketization $F$ of the MN scheme increases very quickly when the number of users $K$ increases. The first scheme with lower supacketization compared with the MN scheme was proposed by Shanmugam et al. in \cite{SJTLD} by a grouping method. Yan et al. in \cite{YCTC} generated two classes of schemes by means of a combinatorial structure called placement delivery array (PDA), which is an $F\times K$ array and could be used to characterize the placement and delivery phases for all possible requests. There are some other coded caching schemes with lower subpacketization levels obtained by increasing the transmission load $R$, for instance, schemes by means of linear block codes \cite{TR}, the special $(6,3)$-free hypergraphs \cite{SZG}, the $(r,t)$ Ruzsa-Szem\'{e}redi graphs \cite{STD}, the strong edge coloring of bipartite graphs \cite{YTCC},  the projective space over $\mathbb{F}_p$ \cite{K}, combinatorial design \cite{ASK} and so on. In \cite{CJYT} the authors generalized the constructions in \cite{TR} and proposed two classes of schemes with flexible memory size based on the constructions in \cite{YCTC,SZG} using the generalized grouping method in \cite{CJWY}. By means of the MN scheme, the authors in \cite{CJTY} obtained two variant MN schemes where $F$ is minimal for the fixed $R$ and memory size $M$. Some of these schemes are listed in Table \ref{tab-known}, which was summarized by Krishnan in \cite{K}.
%However, we didn't include the schemes from \cite{STD} in the table, since we focus on explicit constructions of schemes,  while their constructions are based on some probabilistic arguments. In this table, these schemes have  advantages on either transmission loads or the subpacketizations.
%{
\begin{table}[http!]
\center
\caption{Summary of some known results\label{tab-known}}
\small{
\begin{tabular}{|c|c|c|c|c|c|}
\hline
Schemes and parameters &User number $K$  &Caching ratio $\frac{M}{N}$   &Load $R$   &Subpacketization $F$    \\ \hline
\tabincell{l}{MN scheme in \cite{MN},  any $k$, $t\in  \mathbb{Z}^+$ with $t<k$} &$k$&$\frac{t}{k}$&$\frac{k-t}{1+t}$&${k\choose t}$ \\ \hline

\tabincell{l}{Scheme in \cite{YCTC}, any $m$ and $q\in\mathbb{Z}^+$}&\multirow{2}*{$(m+1)q$}&$\frac{1}{q}$&$q-1$&${q}^m$\\[0.3cm] \hline

\tabincell{l}{Scheme in \cite{SZG}, any $m$, $t$, $q\in\mathbb{Z}^+$ with $t<m$}&\multirow{2}*{${m\choose t}{q}^{t}$} &$1-(\frac{q-1}{q})^{t}$&$(q-1)^{t}$&${q}^m$\\[0.3cm]  \hline

\tabincell{l}{Scheme in \cite{CJTY}, any $k$, $t\in  \mathbb{Z}^+$ with $t<k$}
&${k\choose t+1}$ &$1-\frac{t+1}{{k\choose t}}$&$\frac{k}{{k\choose t}}$&${k\choose t}$\\   \cline{2-5}
&${k\choose t}$ &$\frac{t}{k}$&$\frac{{k\choose t+1}}{k}$& $k$\\   \hline

\tabincell{l}{Scheme in \cite{YTCC}, any $m$, $a$, $b$, $\lambda\in\mathbb{Z}^+$\\ with $a,b<m$ and $\lambda<\min \{a,b\}$ }&${m\choose a}$ &$1-\frac{{a\choose \lambda}{m-a\choose b-\lambda}}{{m\choose b}}$ &
$\frac{{m\choose a+b-2\lambda}{a+b-2\lambda\choose a-\lambda}}{{m\choose b}}$
&${m\choose b}$\\   \hline
\end{tabular} }
\end{table}
%}

\subsection{Contribution and organization  of this paper}

In \cite{SDLT}, Shanmugam et al. discovered that most of coded caching schemes can be recasted into a PDA when $K\leq N$. This implies that PDA is a useful tool for designing coded caching schemes. It is worth noting that PDA can also be used to study distributed computing \cite{YYW,YWYT}, privacy coded caching \cite{CLW,AST}, linear function retrieval \cite{YT} and so on. In this paper, we will consider the schemes by means of PDAs when $K\leq N$.

Motivated by several constructions of PDA, in this paper we introduce a new framework for constructing PDA, where each row index is denoted by a row vector of some matrix called row index matrix and each column index is denoted by an element of a direct product set called column index set. Then the problem of designing a PDA (i.e., coded caching scheme) can be transformed into a problem of choosing a row index matrix and a column index set appropriately. It is worth noting that some previously known schemes (e.g., the MN scheme \cite{MN}, the first PDA in \cite{YCTC}, PDAs in \cite{SZG}) can be obtained under our framework. For $K={m\choose t}q^t$ for  positive integers $m$, $t$ with $t<m$ and $q\geq 2$, we obtain the following results:
\begin{itemize}
\item If all the users have the same memory size, the row index matrix must be an orthogonal array with degree $m$, $q$ levels and strength $t$.
\item If the coded gain is ${m\choose t}$, which is the maximal coded gain under our framework, the row index matrix must be a covering array with degree $m$, $q$ levels and strength $m-t$. Consequently a lower bound on the subpacketization of the schemes with the maximal coded gain ${m\choose t}$ is obtained, i.e., $F\geq q^{m-t}$.
\end{itemize}
Here orthogonal arrays and covering arrays are classic combinatorial structures in combinatorial design theory \cite{Stinson} and they will be introduced in detail in Section \ref{sect-construction via OA}.
The above results imply the following two statements:
\begin{itemize}
\item When $K={m\choose t}q^t$, if we want to design a coded caching scheme with the maximal coded gain $g={m\choose t}$, we only need to choose an appropriate row index matrix which is not only an orthogonal array with strength $t$ but also a covering array with strength $m-t$.
\item The scheme in \cite{YCTC} listed in the third row of Table \ref{tab-known} achieves our lower bound on the subpacketization.
\end{itemize}

In addition, we present three new constructions of PDA under our framework, which are listed in Table \ref{tab-main}. It is worth noting that compared with the scheme from \cite{SZG} (see Table \ref{tab-known}), the second construction reduces the subpacketization by a factor of $q$ without sacraficing the transmission load and the third construction reduces the subpacketization by a factor of $q^t$ at a cost of transmission load increasing from $(q-1)^t$ to $q^t -1$ for the same number of users and the same memory size.

\begin{table}[http!]
\center
\caption{Some main schemes in this paper\label{tab-main}}
\small{
\begin{tabular}{|c|c|c|c|c|c|}
\hline
Schemes and parameters &User number $K$  &Caching ratio $\frac{M}{N}$   &Load $R$   &Packet number $F$    \\ \hline
\tabincell{l}{Theorem \ref{th-subsets-PDA},  any $m$, $s$, $t\in  \mathbb{Z}^+$, $\omega\in \mathbb{Z}$\\ with $0\leq \omega\leq t\leq s$, $s+t-2\omega\leq m$
} &${t\choose \omega}{m\choose t}$&$1-\frac{{m-t\choose s-w}}{{m\choose s}}$&${m\choose s+t-2\omega}/{m\choose s}$&${m\choose s}$ \\ \hline

\tabincell{l}{Theorem \ref{th-s=m-1}, any $m$, $s$, $t$, $q\in\mathbb{Z}^+$,\\ $0<t<m$} & ${m\choose t}q^t$&$1-(\frac{q-1}{q})^{t}$ & $(q-1)^t$ & $q^{m-1}$\\  \hline

\tabincell{l}{Theorem \ref{th-reult-m-2}, any $m,t\in\mathbb{Z}^+$ with\\ $2t\leq m$, and some prime power $q$ }&${m\choose t}q^t$ &$1-(\frac{q-1}{q})^t$ & $q^t-1$ & $q^{m-t}$\\[0.3cm]   \hline

\end{tabular} }
\end{table}

The rest of this paper is organized as follows.

In Section \ref{sec_prob} we introduce basic notations, the definition of PDA and the
relationship between a PDA and a coded caching scheme. Then in Section \ref{se-characterization} a framework for constructing PDA is proposed
and a new scheme is obtained, which can be regarded as a generalization of several previously known schemes. In Section  \ref{sect-construction via OA}, the requirements for the row index matrix is discussed when each user has the same memory size or the coded gain is the
largest. Another two new schemes based on orthogonal arrays are obtained in Section \ref{sec-OAs}. Finally the conclusion is drawn in
Section \ref{conclusion}.

\section{Preliminaries}\label{sec_prob}
In this paper, we will use the following notations unless otherwise stated.
\begin{itemize}
\item We use bold capital letters,  bold lower case letters and curlicue letters to denote arrays, vectors and sets respectively.
\item For any positive integers $m$ and $t$ with $t< m$, let $[0,m)=\{0,1,\ldots,m-1\}$ and ${[0,m)\choose t}=\{\mathcal{T}\ |\   \mathcal{T}\subseteq [0,m), |\mathcal{T}|=t\}$, i.e., ${[0,m)\choose t}$ is the collection of all $t$-sized subsets of $[0,m)$.
\item Given an $F\times m$ matrix $\mathbf{F}$ and a subset $\mathcal{T}\subseteq [0,m)$, $\mathbf{F}|_{\mathcal{T}}$ is a submatrix obtained by taking only the columns with subscript $j \in \mathcal{T}$ (i.e., deleting all the columns with subscript $j\in [0,m)\setminus \mathcal{T}$). Similarly,  given a vector ${\bf a}$ with length $m$, ${\bf a}|_{\mathcal{T}}$ is a vector with length $|\mathcal{T}|$ obtained by taking only the coordinates with subscript $j \in \mathcal{T}$.
\item For any two vectors ${\bf x}$ and ${\bf y}$ with the same length, $d({\bf x},{\bf y})$ is the number of coordinates in which ${\bf x}$ and ${\bf y}$ differ. Let $wt({\bf x})$ be the weight of ${\bf x}$, i.e., the number of nonzero coordinates of ${\bf x}$.
\end{itemize}

\subsection{Placement delivery array}
Yan et al.  \cite{YCTC} proposed a combinatorial structure, called placement delivery array, which can characterize the placement phase and delivery phase simultaneously.
\begin{definition}\rm(\cite{YCTC})
\label{def-PDA}
For positive integers $K$,  $F$, and $S$, an $F\times K$ array $\mathbf{P}=(p_{j,k})$, $j\in [0,F), k\in[0,K)$, composed of a specific symbol $``*"$ called star and $S$ symbols $0,1,\cdots, S-1$, is called a $(K,F,S)$ placement delivery array (PDA) if it satisfies the following condition C$1$:
\begin{enumerate}
  \item [C$1$.] For any two distinct entries $p_{j_1,k_1}$ and $p_{j_2,k_2}$,    $p_{j_1,k_1}=p_{j_2,k_2}=s$ is an integer  only if
  \begin{enumerate}
     \item [a.] $j_1\ne j_2$, $k_1\ne k_2$, i.e., they lie in distinct rows and distinct columns; and
     \item [b.] $p_{j_1,k_2}=p_{j_2,k_1}=*$, i.e., the corresponding $2\times 2$  subarray formed by rows $j_1,j_2$ and columns $k_1,k_2$ must be of the following form
  \begin{eqnarray*}
    \left(\begin{array}{cc}
      s & *\\
      * & s
    \end{array}\right)~\textrm{or}~
    \left(\begin{array}{cc}
      * & s\\
      s & *
    \end{array}\right).
  \end{eqnarray*}
   \end{enumerate}
  \end{enumerate}
\end{definition}
In addition, for any positive integer $Z\leq F$,  the array $\mathbf{P}$ is denoted by $(K,F,Z,S)$ PDA if the following condition is further satisfied:
\begin{enumerate}
\item [C$2$.] each column has exactly $Z$ stars.
   \end{enumerate}
For instance, it is easy to verify that the following array is a $(6,4,2,4)$ PDA.
\begin{eqnarray}
\label{eq-E-pda-1}
\mathbf{P}=\left(\begin{array}{cccccc}
*&*&*&0&1&2\\
*&0&1&*&*&3\\
0&*&2&*&3&*\\
1&2&*&3&*&*
\end{array}\right).
\end{eqnarray}
In \cite{YCTC}, Yan et al. showed that a PDA can generate a coded caching scheme as follows.
\begin{theorem}(\cite{YCTC})
\label{th-Fundamental}If there exits a $(K,F,Z,S)$ PDA, then we can obtain an $F$-division $(K,M,N)$ coded caching scheme with $\frac{M}{N}=\frac{Z}{F}$ and transmission load $R=\frac{S}{F}$.
\end{theorem}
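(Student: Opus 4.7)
The plan is to convert the PDA $\mathbf{P}=(p_{j,k})$ directly into a coded caching scheme by using rows as packet indices and columns as user indices, then check that the two defining PDA axioms enforce exactly the properties a valid scheme needs. I would begin by defining the placement phase: split each file $W_i$ into $F$ equal-sized packets $W_{i,0},\dots,W_{i,F-1}$, and have user $k$ cache every $W_{i,j}$ for which $p_{j,k}=*$. Condition C$2$ guarantees that each column contains exactly $Z$ stars, so each user stores $Z$ out of $F$ packets per file, giving the memory ratio $M/N=Z/F$ at once.

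Next, for the delivery phase, I would group the non-star entries of $\mathbf{P}$ by their integer label and, for each $s\in\{0,1,\dots,S-1\}$, broadcast the single XOR packet $X_s=\bigoplus_{(j,k):\,p_{j,k}=s}W_{d_k,j}$. There are at most $S$ such symbols and each $X_s$ has the size of one file packet, so the normalized amount transmitted is at most $S/F$. Thus proving $R=S/F$ reduces entirely to showing that every user can decode its requested file in the worst case.

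The main obstacle is precisely this decoding step, and it is exactly where condition C$1$ does the real work. Fix a user $k$ and a packet $W_{d_k,j}$ with $p_{j,k}=s$ an integer; user $k$ must extract $W_{d_k,j}$ from $X_s$, so it must already possess every other summand of $X_s$. For any other term $W_{d_{k'},j'}$ appearing in $X_s$, we have $p_{j',k'}=s=p_{j,k}$ with $(j',k')\neq(j,k)$, so by C$1$(a)–(b) necessarily $j\neq j'$, $k\neq k'$, and $p_{j',k}=*$. That star is precisely the statement that user $k$ has cached $W_{d_{k'},j'}$ in the placement phase. Hence every interfering summand can be XORed off, and $W_{d_k,j}$ is revealed.

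Finally, I would close by noting that the $F-Z$ non-star entries in column $k$ are each labelled by some integer $s\in[0,S)$, and by the previous paragraph the corresponding packet is recovered from $X_s$. Together with the $Z$ packets of $W_{d_k}$ already stored in $\mathcal{Z}_k$, user $k$ therefore reconstructs all $F$ packets of $W_{d_k}$. Since this argument is uniform over the demand vector $\mathbf{d}$, the scheme is valid in the worst case and the transmission load equals $S/F$. The argument is largely mechanical once the placement/delivery rules are written down; the only subtle ingredient is the "L-shaped" subarray condition C$1$, which is engineered exactly so that in every multicast $X_s$ the interference cancels at each intended receiver.
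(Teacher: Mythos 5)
Your proposal is correct and follows essentially the same route as the paper: it is precisely the placement/delivery procedure of Algorithm \ref{alg:PDA}, with condition C$2$ yielding $\frac{M}{N}=\frac{Z}{F}$ and condition C$1$ guaranteeing that each user can cancel all interfering summands in $X_s$ and recover its requested packet, so that $R=\frac{S}{F}$. No gaps to report.
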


In particular, Algorithm \ref{alg:PDA} was used  to generate a coded caching scheme from a PDA. In fact, a PDA $\mathbf{P}=(p_{j,k})$ is an $F\times K$ array composed of a specific symbol $``*"$ and $S$ integers, where columns represent the user index and rows represent the packet index. If $p_{j,k}=*$, then user $k$ has cached the $j$-th packet of all files, which satisfies identical uncoded caching policy. If $p_{j,k}=s$ is an integer, it means that the $j$-th packet of all files is not stored by user $k$. Then the server broadcasts a coded packet (i.e. the XOR of all the requested packets indicated by $s$) to the users at time slot $s$. The property C$1$ of PDA  guarantees that each user can get its requested packet, since it has cached all the other packets in the coded packet except its requested one. The occurrence number of the integer $s$ in $\mathbf{P}$, denoted by $g_s$, is the coded gain at time slot $s$, since the coded packet broadcasted at time slot $s$ is useful for $g_s$ users.
\begin{algorithm}[htb]
\caption{Coded caching scheme based on PDA in \cite{YCTC}}\label{alg:PDA}
\begin{algorithmic}[1]
\Procedure {Placement}{$\mathbf{P}$, $\mathcal{W}$}
\State Split each file $W_n\in\mathcal{W}$ into $F$ packets, i.e., $W_{n}=\{W_{n,j}\ |\ j\in[0,F)\}$.
\For{$k\in \mathcal{K}$}
\State $\mathcal{Z}_k\leftarrow\{W_{n,j}\ |\ p_{j,k}=*, \forall~n\in [0,N)\}$
\EndFor
\EndProcedure
\Procedure{Delivery}{$\mathbf{P}, \mathcal{W},{\bf d}$}
\For{$s=0,1,\cdots,S-1$}
\State  Server sends $\bigoplus_{p_{j,k}=s,j\in[0,F),k\in[0,K)}W_{d_{k},j}$.
\EndFor
\EndProcedure
\end{algorithmic}
\end{algorithm}

\begin{example}\rm
\label{E-pda}
Using the PDA in \eqref{eq-E-pda-1} and Algorithm \ref{alg:PDA},  we can obtain a $4$-division $(6,3,6)$ coded caching scheme as follows.
\begin{itemize}
   \item \textbf{Placement Phase}: From Line 2 of  Algorithm \ref{alg:PDA}, we have $W_n=\{W_{n,0},W_{n,1},W_{n,2},W_{n,3}\}$, $n\in [0,6)$. Then by Lines 3-5, the contents in each user are
\begin{eqnarray*}
       \mathcal{Z}_0=\left\{W_{n,0},W_{n,1}\ |\ n\in[0,6)\right\},
       \mathcal{Z}_1=\left\{W_{n,0},W_{n,2}\ |\ n\in[0,6)\right\},\\
       \mathcal{Z}_2=\left\{W_{n,0},W_{n,3}\ |\ n\in[0,6)\right\},
       \mathcal{Z}_3=\left\{W_{n,1},W_{n,2}\ |\ n\in[0,6)\right\},\\
       \mathcal{Z}_4=\left\{W_{n,1},W_{n,3}\ |\ n\in[0,6)\right\},
       \mathcal{Z}_5=\left\{W_{n,2},W_{n,3}\ |\ n\in[0,6)\right\}.
\end{eqnarray*}
   \item \textbf{Delivery Phase}: Assume that the request vector is $\mathbf{d}=(0,1,2,3,4,5)$. By Lines 8-10, the signals sent by the server are listed in Table \ref{table1}.
   \begin{table}[!htp]
\centering
  \caption{Delivery steps in Example \ref{E-pda} }\label{table1}
  \small{
  \begin{tabular}{|c|c|}
\hline
   Time Slot& Transmitted Signnal\\ \hline
   $0$&$W_{0,2}\oplus W_{1,1}\oplus W_{3,0}$\\ \hline
   $1$&$W_{0,3}\oplus W_{2,1}\oplus W_{4,0}$\\ \hline
  $2$& $W_{1,3}\oplus W_{2,2}\oplus W_{5,0}$\\ \hline
  $3$& $W_{3,3}\oplus W_{4,2}\oplus W_{5,1}$\\ \hline
  \end{tabular}}
\end{table}
\end{itemize}
\end{example}
\section{A framework for constructing placement delivery array}
\label{se-characterization}
Motivated by several constructions of PDAs, we propose a framework for constructing PDAs as follows.
\begin{construction}
\label{construction}
For any positive integers $q$, $m$ and $t$ with $0<t<m$, let
$$
\mathcal{F}\subseteq [0,q)^m,  \ \ \mathcal{K}\subseteq {[0,m)\choose t}\times [0,q)^t,
$$
then an $|\mathcal{F}|\times |\mathcal{K}|$ array $\mathbf{P}=(p_{{\bf f},{\bf k}})$ with ${\bf f}=(f_0,f_1,\ldots, f_{m-1})\in \mathcal{F}$ and  ${\bf k}=({\mathcal{T}},{\bf b})=(\{\delta_0,\delta_1,\ldots, \delta_{t-1}\},(b_0,b_1,\ldots,b_{t-1}))\in \mathcal{K}$
with $0\leq\delta_0<\ldots<\delta_{t-1}<m$, can be defined in the following way
\begin{eqnarray}
\label{eq-constr-PDA}
p_{{\bf f},{\bf k}}=\left\{
\begin{array}{ll}
({\bf e},n_{\bf e}) & \textrm{if}~d({\bf f}|_{\mathcal{T}}$, ${\bf b})=t, \\
* & \textrm{otherwise},
\end{array}
\right.
\end{eqnarray}
where ${\bf e}=(e_0,e_{1},\ldots,e_{m-1})\in[0,q)^m$ such that  \begin{eqnarray}
\label{eq-putting integer}
e_i=\left\{
\begin{array}{ll}
b_h & \textrm{if}\ i=\delta_h, h\in [0,t),\\[0.2cm]
f_i & \textrm{otherwise.} \end{array}
\right.
\end{eqnarray} and $n_{\bf e}$ is the  occurrence  order of vector ${\bf e}$ that occurs in column ${\bf k}$ and starts from $0$. For convenience, the matrix ${\bf F}$, whose collection of row vectors is $\mathcal{F}$, is called the row index matrix. The set $\mathcal{K}$ is called the column index set.
\end{construction}

We will show that the array constructed by Construction \ref{construction} is a PDA later. It is worth noting that in \cite{SZG} the authors constructed a PDA, where $\mathcal{F}=[0,q)^m$, $\mathcal{K}={[0,m)\choose t}\times [0,q)^t$ and the non-star entries are represented by $(m+t)$-dimensional vectors. Clearly the difference between the construction in \cite{SZG} and our construction lies in the representations of the row index matrix, column index set and definition of non-star entries. In our construction, the row index matrix  and column index set are more flexible and the rule of non-star entries is simpler compared with the construction in \cite{SZG}. Moreover, schemes with better performance can be obtained by Construction \ref{construction}.

\begin{example}
\label{ex-OA423-Array}
Assume that $q=2$, $m=3$ and $t=2$. Let $\mathcal{F}=\{(0,0,0), (1,0,1),(0,1,1),(1,1,0)\}$, i.e.,
\begin{eqnarray}
\label{eq-OA-1}\mathbf{F}=\left(
\begin{array}{ccc}
0&0&0\\
1&0&1\\
0&1&1\\
1&1&0
\end{array}\right),
\end{eqnarray}
$\mathcal{K}={[0,3)\choose 2}\times [0,2)^2$. Then we have $F=|\mathcal{F}|=4$ and $K=|\mathcal{K}|={3\choose 2}\times 4=12$. From Construction \ref{construction}, the following $4\times 12$ array can be obtained.
\begin{eqnarray}
\label{eq-PDA-322}\begin{footnotesize}
\begin{array}{|c|cccc|cccc|cccc|}\hline
\multicolumn{1}{|c|}{ \multirow{2}*{ \diagbox{${\bf f}$} {$(\mathcal{T},{\bf b} )$}} }&
 \multicolumn{4}{|c|}{ \multirow{1}*{$\mathcal{T}=\{0 ,1\}$}} & \multicolumn{4}{|c|}{ \multirow{1}*{$\mathcal{T}=\{0,2\}$}}& \multicolumn{4}{|c|}{ \multirow{1}*{$\mathcal{T}=\{1,2\}$}}\\ \cline{2-13}
       &00  & 10 & 01 & 11 & 00 & 10&01   &11 & 00  & 10 & 01 &11\\ \hline
000&  *    &*      &*      &1100&*      &*      &*      &1010&*      &*      &*      &0110\\
101&  *    &*      &0110&*      &0000&*      &*      &*      &*      &1100&*      &*\\
011&*      &1010&*      &*      &*      &1100&*      &*      &0000&*      &*      &*\\
110&0000&*      &*      &*      &*      &*      &0110&*      &*      &*      &1010&*\\ \hline
\end{array}\end{footnotesize}
\end{eqnarray}
Let us consider the column represented by $(\{0,1\},(0,0))$. From \eqref{eq-putting integer}, we have $p_{(0,0,0),(\{0,1\},(0,0))}=*$ because $d((0,0,0)|_{\{0,1\}},(0,0))$ $=d((0,0),(0,0))=0$. Then $p_{(1,1,0),(\{0,1\},(0,0))}=({\bf e},n_e)$ with ${\bf e}=(0,0,0)$ and $n_e=0$, since $d((1,1,0)|_{\{0,1\}},(0,0))=d((1,1),(0,0))=2$ and it is the first occurrence of ${\bf e}$ in column $(\{0,1\},(0,0))$. It is easy to check that the above array is a $(12,4,3,4)$ PDA.
\end{example}

If $({\bf e},n_{\bf e})$ occurs in $\mathbf{P}$, we say ${\bf e}$ occurs in $\mathbf{P}$. For each vector ${\bf e}$ occurring in $\mathbf{P}$, if the number of its occurrences in each column is at most once, we always omit the occurrence order $n_{\bf e}$. For instance, the PDA in \eqref{eq-PDA-322} can be written as

\begin{eqnarray*}
\begin{footnotesize}
\begin{array}{|c|cccc|cccc|cccc|}\hline
\multicolumn{1}{|c|}{ \multirow{2}*{ \diagbox{${\bf f}$} {$(\mathcal{T},{\bf b} )$}} }&
 \multicolumn{4}{|c|}{ \multirow{1}*{$\mathcal{T}=\{0 ,1\}$}} & \multicolumn{4}{|c|}{ \multirow{1}*{$\mathcal{T}=\{0,2\}$}}& \multicolumn{4}{|c|}{ \multirow{1}*{$\mathcal{T}=\{1,2\}$}}\\ \cline{2-13}
       &00  & 10 & 01 & 11 & 00 & 10&01   &11 & 00  & 10 & 01 &11\\ \hline
000&  *    &*      &*      &110&*      &*      &*      &101&*      &*      &*      &011\\
101&  *    &*      &011&*      &000&*      &*      &*      &*      &110&*      &*\\
011&*      &101&*      &*      &*      &110&*      &*      &000&*      &*      &*\\
110&000&*      &*      &*      &*      &*      &011&*      &*      &*      &101&*\\ \hline
\end{array}\end{footnotesize}
\end{eqnarray*}

Next we will prove that the array $\mathbf{P}$ generated by Construction \ref{construction} is indeed a $(K,F,S)$ PDA. We will show that $\mathbf{P}$ satisfies condition C1 of Definition \ref{def-PDA}.
\begin{proposition}
\label{pro-property-integer}
Let $\mathbf{P}$ be the array generated by Construction \ref{construction}. If there are two distinct entries being the same vector $({\bf e},n_{{\bf e}})$, say $p_{{\bf f},(\mathcal{T},{\bf b})}=p_{{\bf f}',(\mathcal{T}',{\bf b}')}=({\bf e},n_{{\bf e}})$, then the following two statements must hold.
\begin{itemize}
\item [1)] $\mathcal{T}\neq \mathcal{T}'$ and ${\bf f}\neq{\bf f}'$, which implies that they lie in different columns and different rows;
\item [2)] The subarray formed by rows ${\bf f}$, ${\bf f}'$ and columns $(\mathcal{T},{\bf b})$, $(\mathcal{T}',{\bf b}')$ must be of the following form
  \begin{eqnarray*}
    \left(\begin{array}{cc}
      ({\bf e},n_{{\bf e}}) & *\\
      * & ({\bf e},n_{{\bf e}})
    \end{array}\right)~\textrm{or}~
    \left(\begin{array}{cc}
      * & ({\bf e},n_{{\bf e}})\\
      ({\bf e},n_{{\bf e}}) & *
    \end{array}\right).
  \end{eqnarray*}
\end{itemize}
\end{proposition}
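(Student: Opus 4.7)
The plan is to unpack Construction~\ref{construction} and track, position-by-position, what the equality $p_{{\bf f},(\mathcal{T},{\bf b})}=p_{{\bf f}',(\mathcal{T}',{\bf b}')}=({\bf e},n_{\bf e})$ forces on the row vectors and on the column labels. The key observation to carry throughout is that the vector ${\bf e}$ decomposes into two parts: on coordinates in $\mathcal{T}$ it records the column data (namely $e_{\delta_h}=b_h$), and on coordinates outside $\mathcal{T}$ it copies the row (namely $e_i=f_i$). The same decomposition, applied to $\mathcal{T}'$ and ${\bf f}'$, then gives linear equations that can be read off on the symmetric differences $\mathcal{T}\setminus\mathcal{T}'$ and $\mathcal{T}'\setminus\mathcal{T}$.

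For part~1, first I would rule out the case of equal columns by invoking the occurrence-order convention: inside a fixed column $(\mathcal{T},{\bf b})$, distinct rows that produce the same ${\bf e}$ are assigned \emph{distinct} values of $n_{\bf e}$, so two genuinely distinct entries with the same $({\bf e},n_{\bf e})$ cannot lie in the same column. Next I would show $\mathcal{T}\neq \mathcal{T}'$: if $\mathcal{T}=\mathcal{T}'$, then ${\bf b}$ is determined by reading off $e_{\delta_h}$ for $h\in[0,t)$, and the same reading forces ${\bf b}={\bf b}'$, contradicting the previous step. Finally I would argue ${\bf f}\neq{\bf f}'$ by picking any $i\in\mathcal{T}'\setminus\mathcal{T}$ (non-empty since $|\mathcal{T}|=|\mathcal{T}'|=t$ and $\mathcal{T}\neq\mathcal{T}'$) and comparing the two expressions for $e_i$: from the first entry $e_i=f_i$ (since $i\notin \mathcal{T}$), and from the second $e_i=b'_{h'}$ where $\delta'_{h'}=i$; if ${\bf f}={\bf f}'$, then $f_{\delta'_{h'}}=f'_{\delta'_{h'}}=b'_{h'}$, contradicting the non-star condition $d({\bf f}'|_{\mathcal{T}'},{\bf b}')=t$.

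For part~2, the same positional identity $f_i=b'_{h'}$ for $i=\delta'_{h'}\in\mathcal{T}'\setminus\mathcal{T}$ already shows that ${\bf f}|_{\mathcal{T}'}$ agrees with ${\bf b}'$ in at least one coordinate, hence $d({\bf f}|_{\mathcal{T}'},{\bf b}')<t$, so by \eqref{eq-constr-PDA} the entry $p_{{\bf f},(\mathcal{T}',{\bf b}')}$ is a star. A symmetric argument with the roles of the two entries interchanged (using $\mathcal{T}\setminus\mathcal{T}'\neq\emptyset$ and comparing $e_i$ for $i$ in that set) yields $p_{{\bf f}',(\mathcal{T},{\bf b})}=*$. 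Combined with part~1, this is exactly the required $2\times 2$ subarray pattern.

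There is no real obstacle here; the main thing to be careful about is bookkeeping the index $h$ versus $\delta_h$ when moving between the ``column coordinates'' $(b_0,\ldots,b_{t-1})$ and the ``row coordinates'' $(e_0,\ldots,e_{m-1})$, and making sure the occurrence-order convention is invoked \emph{before} deducing that $\mathcal{T}\neq \mathcal{T}'$, since otherwise one would mistakenly allow two different rows in the same column to share a value.
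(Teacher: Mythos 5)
Your proposal is correct and follows essentially the same route as the paper's proof: rule out equal columns via the occurrence order, deduce $\mathcal{T}\neq\mathcal{T}'$ by reading ${\bf b}$ off ${\bf e}|_{\mathcal{T}}$, and then use an index in each of $\mathcal{T}\setminus\mathcal{T}'$ and $\mathcal{T}'\setminus\mathcal{T}$ to force the two off-diagonal entries to be stars, which also yields ${\bf f}\neq{\bf f}'$. The only cosmetic difference is that you establish ${\bf f}\neq{\bf f}'$ directly by contradiction before the star argument, whereas the paper reads it off afterwards from the star entries.
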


\begin{proof}Clearly, each vector entry $({\bf e},n_{{\bf e}})=(e_0,e_1,\ldots,e_{m-1},n_{{\bf e}})$ occurs at most once in each column by \eqref{eq-constr-PDA}, i.e., $(\mathcal{T},{\bf b})\neq (\mathcal{T}',{\bf b}')$. Let
\begin{eqnarray*}
&{\bf f}=(f_0,f_1,\ldots,f_{m-1}),&\  {\bf f}'=(f'_0,f'_1,\ldots,f'_{m-1}),\\
&\mathcal{T}=\{\delta_0,\delta_1,\ldots,\delta_{t-1}\},&\ \mathcal{T}'=\{\delta'_0,\delta'_1,\ldots,\delta'_{t-1}\},\\
&{\bf b}=(b_0,b_1,\ldots,b_{t-1}),&\ {\bf b}'=(b'_0,b'_1,\ldots,b'_{t-1}).
\end{eqnarray*}
If $\mathcal{T}=\mathcal{T}'$, then ${\bf b}\neq {\bf b}'$ holds. Without loss of generality, assume that $b_0\neq b'_0$. By \eqref{eq-putting integer} we have $e_{\delta_0}=b_0=b'_0$, which contradicts to the hypothesis. So we have $\mathcal{T}\neq\mathcal{T}'$, which implies that there must exist two distinct integers, say $i$, $i'\in[0,m)$, satisfying
$$i\in \mathcal{T},\  i\not\in\mathcal{T}'\ \ \ \ \hbox{and}\ \ \ \ i'\in \mathcal{T}',\  i'\not\in\mathcal{T}.$$
Without loss of generality, assume that $i=\delta_0\in \mathcal{T}$ and $i'=\delta'_0\in \mathcal{T}'$. From Construction \ref{construction}, we have
\begin{eqnarray*}
e_{\delta_0}=b_{0}=f'_{\delta_0},\ \ \ \hbox{and} \ \ \ e_{\delta'_0}=b'_0=f_{\delta'_0}.
\end{eqnarray*}
Hence $p_{{\bf f}, (\mathcal{T}',{\bf b}')}=p_{{\bf f}', (\mathcal{T},{\bf b})}=*$  because both $d({\bf f}|_{\mathcal{T}'},{\bf b}')<t$ and $d({\bf f}'|_{\mathcal{T}},{\bf b})<t$ hold. This implies ${\bf f}\neq {\bf f}'$. The proof is complete.
\end{proof}
\begin{remark}
\label{con-PDA}
The array $\mathbf{P}$ generated by Construction \ref{construction} is a $(K, F, S)$ PDA with $K=|\mathcal{K}|$ and $F=|\mathcal{F}|$ by Proposition \ref{pro-property-integer}. Furthermore, if the number of stars in each column is the same, such as $Z$, then $\mathbf{P}$ is a $(K, F, Z, S)$ PDA. It is worth noting that the number of occurrences of each vector $({\bf e},n_{{\bf e}})$ in $\mathbf{P}$, i.e. the coded gain, is at most ${m\choose t}$ by Proposition \ref{pro-property-integer}.
\end{remark}

Under the framework of Construction \ref{construction}, we only need to choose a row index matrix and a column index set appropriately in order to design a coded caching scheme. Next a new scheme by Construction \ref{construction} will be presented, which can be regarded as a generalization of the MN scheme \cite{MN} and the PDA constructed in \cite{SZG}.

\begin{theorem}
\label{th-subsets-PDA}
For any positive integers $m$, $s$, $t$ and $\omega$ with $\omega\leq t\leq s$ and $s+t-2\omega\leq m$, there exists a $({t\choose \omega}{m\choose t},{m\choose s}, {m\choose s}-{m-t\choose s-\omega},{m\choose s+t-2\omega})$ PDA which can realize a $({t\choose \omega}{m\choose t},M,N)$ coded caching scheme with $\frac{M}{N}=1-{m-t\choose s-w}/{m\choose s}$, subpacketization $F={m\choose s}$ and transmission load $R={m\choose s+t-2\omega}/{m\choose s}$.
\end{theorem}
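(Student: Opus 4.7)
The plan is to specialize Construction \ref{construction} by taking $q=2$ and choosing the row index matrix and the column index set as indicators of subsets, and then verify each parameter by a direct counting argument; Proposition \ref{pro-property-integer} and Remark \ref{con-PDA} will do the work of checking the PDA axioms.

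First I would take $\mathcal{F}\subseteq\{0,1\}^m$ to be the set of characteristic vectors of all $s$-subsets of $[0,m)$, so that each $\mathbf{f}\in\mathcal{F}$ is identified with an $s$-subset $\mathcal{S}\subseteq[0,m)$, giving $F=|\mathcal{F}|=\binom{m}{s}$. For the column index set I would take
\[
\mathcal{K}=\left\{(\mathcal{T},\mathbf{b})\,:\,\mathcal{T}\in\tbinom{[0,m)}{t},\ \mathbf{b}\in\{0,1\}^t,\ wt(\mathbf{b})=t-\omega\right\},
\]
which has $|\mathcal{K}|=\binom{m}{t}\binom{t}{\omega}$ elements, matching the prescribed user number. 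Proposition \ref{pro-property-integer} then immediately certifies that the resulting array $\mathbf{P}$ is a PDA, so the remaining work is purely enumerative.

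Next I would translate the non-star condition $d(\mathbf{f}|_{\mathcal{T}},\mathbf{b})=t$ in \eqref{eq-constr-PDA} into a set identity. Writing $\mathcal{B}=\{\delta_h\in\mathcal{T}\,:\,b_h=1\}$ (of size $t-\omega$) and $\bar{\mathcal{B}}=\mathcal{T}\setminus\mathcal{B}$ (of size $\omega$), the condition is equivalent to $\mathcal{S}\cap\mathcal{T}=\bar{\mathcal{B}}$, i.e.\ $\bar{\mathcal{B}}\subseteq\mathcal{S}$ and $\mathcal{S}\cap\mathcal{B}=\emptyset$. The number of $s$-sets $\mathcal{S}$ obeying these constraints is $\binom{m-t}{s-\omega}$, so each column carries exactly $Z=\binom{m}{s}-\binom{m-t}{s-\omega}$ stars, which verifies condition C$2$ and the claimed caching ratio.

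Finally I would count the distinct non-star vectors. Using \eqref{eq-putting integer}, $\mathbf{e}$ equals $\mathbf{b}$ on $\mathcal{T}$ and equals $\mathbf{f}$ on $[0,m)\setminus\mathcal{T}$; combined with $\mathcal{S}\cap\mathcal{T}=\bar{\mathcal{B}}$ this gives $wt(\mathbf{e})=wt(\mathbf{b})+|\mathcal{S}\setminus\mathcal{T}|=(t-\omega)+(s-\omega)=s+t-2\omega$, so every integer symbol is indexed by a weight-$(s+t-2\omega)$ binary vector of length $m$. To see that every such vector actually appears, I would invert the assignment: given $\mathbf{e}$ with support $\mathcal{E}$ of size $s+t-2\omega$, any choice of $\mathcal{B}\subseteq\mathcal{E}$ of size $t-\omega$ and $\bar{\mathcal{B}}\subseteq[0,m)\setminus\mathcal{E}$ of size $\omega$ determines $\mathcal{T}=\mathcal{B}\cup\bar{\mathcal{B}}$, $\mathbf{b}$, and $\mathcal{S}=\bar{\mathcal{B}}\cup(\mathcal{E}\setminus\mathcal{B})$, each producing an occurrence of $\mathbf{e}$ in $\mathbf{P}$. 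Hence $S=\binom{m}{s+t-2\omega}$ and $R=S/F=\binom{m}{s+t-2\omega}/\binom{m}{s}$; invoking Theorem \ref{th-Fundamental} converts the PDA into the claimed coded caching scheme.

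The genuinely delicate step is fixing the weight of $\mathbf{b}$: the symmetry $\binom{t}{\omega}=\binom{t}{t-\omega}$ makes it tempting to take weight $\omega$, but only the choice $wt(\mathbf{b})=t-\omega$ simultaneously produces the non-star count $\binom{m-t}{s-\omega}$ and the weight $s+t-2\omega$ required by the statement; I would flag this at the start of the proof to avoid any index confusion, and the constraint $s+t-2\omega\le m$ is exactly what ensures a weight-$(s+t-2\omega)$ vector in $\{0,1\}^m$ exists.
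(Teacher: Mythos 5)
Your proposal is correct and follows essentially the same route as the paper's proof: the same choice of $\mathcal{F}$ (weight-$s$ binary vectors) and $\mathcal{K}$ (with $wt(\mathbf{b})=t-\omega$), the same count $\binom{m-t}{s-\omega}$ of non-star entries per column, and the same identification of the integer symbols with weight-$(s+t-2\omega)$ vectors; your set-theoretic inversion argument for surjectivity also implicitly yields the paper's observation that each $\mathbf{e}$ occurs at most once per column (since $\mathcal{S}$ is determined by $\mathcal{E}$, $\mathcal{B}$, $\bar{\mathcal{B}}$), which is what justifies dropping $n_{\mathbf{e}}$ and equating $S$ with the number of distinct vectors.
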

\begin{proof}
Let $q=2$. For  positive integers $m$, $s$, $\omega$ and $t$ with $\omega\leq t\leq s$ and $s+t-2\omega\leq m$, let $\mathcal{F}$ be a collection of all $m$-vectors with weight $s$, i.e.,
\begin{eqnarray}
\label{eq-subset-F}
\mathcal{F}=\{{\bf f}\ |\ wt({\bf f})=s, {\bf f}\in [0,2)^m\}.
\end{eqnarray}
Let
\begin{eqnarray}
\label{eq-T-column-vecetor}
\mathcal{K}=\mathfrak{T} \times \mathcal{B},\ \ \mathfrak{T}={[0,m)\choose t}, \ \ \mathcal{B}=\{{\bf b}\ |\ wt({\bf b})=t-\omega, {\bf b}\in[0,2)^t\}.
\end{eqnarray}
Then $K=|\mathcal{K}|={t\choose \omega}$ and $F=|\mathcal{F}|={m\choose s}$. From Construction \ref{construction}, an ${m\choose s}\times {t\choose \omega}{m\choose t}$ array $\mathbf{P}$ is obtained. For each column $(\mathcal{T},{\bf b})\in \mathcal{K}$, if there exists a row ${\bf f}\in \mathcal{F}$ satisfying that $p_{{\bf f},(\mathcal{T},{\bf b})}$ is a non-star entry, then $wt({\bf f}|_{\mathcal{T}})=w$ and $wt({\bf f}|_{[0,m)\setminus\mathcal{T}})=s-w$ by \eqref{eq-subset-F}, \eqref{eq-T-column-vecetor} and Construction \ref{construction}. Moreover, ${\bf f}|_{\mathcal{T}}$ is determined by $(\mathcal{T},{\bf b})$ since $q=2$. So the number of non-star entries in each column is ${m-t\choose s-\omega}$. Then there are ${m\choose s}-{m-t\choose s-\omega}$ stars in each column, i.e., $Z={m\choose s}-{m-t\choose s-\omega}$ and $\frac{M}{N}=1-{m-t\choose s-\omega}/{m\choose s}$.

For each vector ${\bf e}$ occurring in $\mathbf{P}$, the number of its occurrences in each column is at most once. Otherwise if ${\bf e}$ occurs in column $(\mathcal{T},{\bf e}|_{\mathcal{T}})$ at least twice, say $P_{({\bf f},(\mathcal{T},{\bf e}|_{\mathcal{T}}))}=P_{({\bf f}',(\mathcal{T},{\bf e}|_{\mathcal{T}}))}=({\bf e},n_{{\bf e}})$. Let ${\bf e}=(e_0,e_1,\ldots,e_{m-1})$, ${\bf f}=(f_0$, $f_1$, $\ldots$, $f_{m-1})$ and ${\bf f}'=(f'_0$, $f'_1$, $\ldots$, $f'_{m-1})$, we have
\begin{eqnarray*}
&f_j=f'_{j}=e_{j}\ \ \ \ \ \ \      j\not\in \mathcal{T},\\
&f_{j'}=f'_{j'}\neq e_{j'}\ \ \ \ j'\in \mathcal{T}.
\end{eqnarray*}
This implies ${\bf f}={\bf f}'$, which contradicts to the hypothesis. So the vector $({\bf e}, n_{{\bf e}})$ can be written as ${\bf e}$ and we only need to count the number of different vectors ${\bf e}$ occurring in $\mathbf{P}$. By Construction \ref{construction} and equations \eqref{eq-subset-F}, \eqref{eq-T-column-vecetor}, each vector occurring in $\mathbf{P}$ has weight $s+t-2\omega$. In fact, each vector ${\bf e}$ with wight $s+t-2\omega$ must occur in $\mathbf{P}$.
Without loss of generality, let us consider the vector ${\bf e}=(e_0,e_1,\ldots, e_{m-1})$ such that $e_j=1$ for $j\in[0,s+t-2\omega)$ and $e_j=0$ for $j\in[s+t-2\omega,m)$. Clearly $wt({\bf e})=s+t-2\omega$. Then the vector ${\bf f}=(f_0,f_1,\ldots,f_{m-1})$ with
\begin{eqnarray*}
f_j=\left\{\begin{array}{cc}
1 & \text{if}\ j\in [0,s-\omega)\cup [s+t-2\omega, s+t-\omega) \\
0 & \text{otherwise}
\end{array}\right.
\end{eqnarray*}
belongs to $\mathcal{F}$ by \eqref{eq-subset-F} and the element $(\mathcal{T},{\bf b})$ with $\mathcal{T}=[s-\omega,s-\omega+t-1)$, ${\bf b}=(b_0,b_1,\ldots,b_{t-1})$ and
\begin{eqnarray*}
b_j=\left\{\begin{array}{cc}
1 & \text{if}\ j\in[0,t-\omega) \\
0 & \text{otherwise}
\end{array}\right.
\end{eqnarray*}
belongs to $\mathcal{K}$ by \eqref{eq-T-column-vecetor} since $wt({\bf f})=s$,
$|\mathcal{T}|=t$ and $wt({\bf b})=t-\omega$ hold. Furthermore, we have $p_{{\bf f},(\mathcal{T},{\bf b})}={\bf e}$ by \eqref{eq-constr-PDA}. As a result, there are ${m\choose s+t-2\omega}$ different vectors occurring in $\mathbf{P}$. So we have $S={m\choose s+t-2\omega}$ and the transmission load is $R=\frac{S}{F}={m\choose s+t-2\omega}/{m\choose s}$. The proof is complete.
\end{proof}

\begin{example}
When $m=4$, $s=t=2$ and $\omega=1$, from Theorem \ref{th-subsets-PDA} we have $K={t\choose \omega}{m\choose t}=12$, $F={m\choose s}=6$, $Z={m\choose s}-{m-t\choose s-\omega}=4$ and $S={m\choose s+t-2\omega}=6$. The following array, which is generated by Construction \ref{construction} using $\mathcal{F}$ in  \eqref{eq-subset-F} and $\mathcal{K}$ in \eqref{eq-T-column-vecetor}, is a $(12,6,4,6)$ PDA. Here we omit the occurrence order $n_{\bf e}$ since each vector ${\bf e}$ occurs at most once in each column.
\begin{eqnarray*}
\label{eq-PDA-421}\begin{footnotesize}
\begin{array}{|c|c|c|c|c|c|c|c|c|c|c|c|c|}\hline
\multicolumn{1}{|c|}{ \multirow{2}*{ \diagbox{${\bf f}$} {$(\mathcal{T},{\bf b} )$}} }&
 \multicolumn{2}{|c|}{ \multirow{1}*{$\mathcal{T}=\{0,1\}$ }}&  \multicolumn{2}{|c|}{ \multirow{1}*{$\mathcal{T}=\{0,2\}$ }}&  \multicolumn{2}{|c|}{ \multirow{1}*{$\mathcal{T}=\{1,2\}$ }}&  \multicolumn{2}{|c|}{ \multirow{1}*{$\mathcal{T}=\{0,3\}$ }}
 &  \multicolumn{2}{|c|}{ \multirow{1}*{$\mathcal{T}=\{1,3\}$ }}&  \multicolumn{2}{|c|}{ \multirow{1}*{$\mathcal{T}=\{2,3\}$ }}\\ \cline{2-13}
 &10 &01  &10  &01  &10  &01  &10  &01  &10  &01  &10  &01\\ \hline
1100&*   &*   &*   &0110&*   &1010&*   &0101&*   &1001&*   &*  \\ \hline
1010&*   &0110&*   &*   &1100&*   &*   &0011&*   &*   &*   &1001 \\ \hline
1001&*   &0101&*   &0011&*   &*   &*   &*   &1100&*   &1010&*  \\ \hline
0110&1010&*   &1100&*   &*   &*   &*   &*   &*   &0011&*   &0101 \\ \hline
0101&1001&*   &*   &*   &*   &0011&1100&*   &*   &*   &0110&* \\ \hline
0011&*   &*   &1001&*   &0101&*   &1010&*   &0110&*   &*   &*  \\ \hline
\end{array}\end{footnotesize}
\end{eqnarray*}

\end{example}
\begin{remark}
\label{re-advan-subset}
\begin{itemize}
\item When $\omega=0$, from Theorem \ref{th-subsets-PDA}, we have a $({m\choose t},{m\choose s}, {m\choose s}-{m-t\choose s},{m\choose s+t})$ PDA, which is exactly the first PDA constructed in \cite{SZG}.
     When $t=1$, $K=m$, $\omega=0$ and $s=KM/N$, from Theorem \ref{th-subsets-PDA} we have a $(K,{K\choose KM/N}, {K\choose KM/N}-{K-1\choose KM/N-1},{K\choose KM/N+1})$ PDA, which realizes the MN scheme in \cite{MN}.
\item When $\omega\neq 0$, the scheme from Theorem \ref{th-subsets-PDA} may have better performance than that of the scheme when $\omega= 0$. For example, in Table \ref{table-comparison}, compared with Scheme 2 with $\omega=0$, Scheme 1 with $\omega=2$ has more users, less memory size and subpacketization but has larger coded gain; Compared with Scheme 3 with $\omega=0$, Scheme 4 with $\omega=1$ has more users, the same memory size and subpacketization but has larger coded gain.
\end{itemize}
\end{remark}
\begin{table}[!htp]
\centering
  \caption{Comparisons of the schemes with $\omega=0$ and $\omega>0$}\label{table-comparison}
\begin{tabular}{|c|c|c|c|c|c|c|c|} \hline
Schemes& $K$	&	$\frac{M}{N}$	&	$F$	&	$R$	&	coded gain 	&	$\omega$		\\	\hline
Scheme 1& 360	&	0.9	&	210	&	0.57143	&	63	&	2		\\	\hline	
Scheme 2&120	&	0.91667	&	252	&	0.17857	&	56	&	0		\\	\hline
Scheme 3&120	&	0.83333 &	210	&	0.57143	&	35	&	0		\\	\hline	
Scheme 4&360	&	0.83333 &	210	&	1.2	&	50	&	1		\\	\hline	
\end{tabular}
\end{table}

\section{Necessary conditions for the row index matrix}
\label{sect-construction via OA}
In this section,  we consider the case that the column index set is  $\mathcal{K}={[0,m)\choose t}\times [0,q)^t$  which implies $K={m\choose t}q^t$. The following definition will be used in the rest of this paper.
\begin{definition}[\cite{Stinson}]
\label{defi-OA}
Let $\mathbf{F}$ be an $F\times m$ matrix over $[0,q)$ for positive integers $F$, $m$,  $q\geq 2$, and $s\leq m$.
\begin{itemize}
\item $\mathbf{F}$ is an orthogonal array (OA) with strength $s$, denoted by OA$_{\lambda}(F,m,q,s)$, if every $1\times s$ row vector appears exactly $\lambda$ times in $\mathbf{F}|_{\mathcal{S}}$ for each $\mathcal{S}\in {[0,m)\choose s}$.
\item $\mathbf{F}$ is a covering array (CA) with strength $s$, denoted by CA$_{\lambda}(F,m,q,s)$, if every $1\times s$ row vector appears at least $\lambda$ times in $\mathbf{F}|_{\mathcal{S}}$ for each $\mathcal{S}\in {[0,m)\choose s}$.
\end{itemize}
\end{definition}
It is well known that $F=\lambda q^s$  for any OA$_{\lambda}(F,m,q,s)$  and thus OA$_{\lambda}(F,m,q,s)$ is sometimes written as OA$_{\lambda}(m,q,s)$ for short \cite{Stinson}. The parameter $\lambda$ is the index of the orthogonal array. If $\lambda$ is omitted, then it is understood to be $1$. Similarly,  the index of the covering array can be omitted when it is $1$. Clearly for any CA$(F,m,q,s)$, we must have $F\geq q^s$ because any OA$(m,q,s)$ is also a CA$(q^s,m,q,s)$.

When $m=3$, $q=3$ and $s=2$, let us consider the array in \eqref{eq-OA-1}. For each $\mathcal{S}=\{0,1\}$, $\{0,2\}$ and $\{1,2\}$, we have \begin{eqnarray*}
\mathbf{F}|_{\{0,1\}}=\left(
\begin{array}{ccc}
0&0\\
1&0\\
0&1\\
1&1
\end{array}\right)\ \ \ \
\mathbf{F}|_{\{0,2\}}=\left(
\begin{array}{ccc}
0&0\\
1&1\\
0&1\\
1&0
\end{array}\right)\ \ \ \
\mathbf{F}|_{\{ 1, 2\}}=\left(
\begin{array}{ccc}
0&0\\
0&1\\
1&1\\
1&0
\end{array}\right)
\end{eqnarray*}
Clearly $\mathbf{F}$ in \eqref{eq-OA-1} satisfies Definition \ref{defi-OA}. So $\mathbf{F}$ is an OA$(3,2,2)$ and it is also a CA$(4,3,2,2)$.

Next the necessary condition for the row index matrix will be discussed when each user has the same memory size or the coded gain is the largest.

\subsection{Necessary conditions for the row index matrix when each user has the same memory size}
\label{sub-C2}
\begin{theorem}\label{pro-property-OA}
For any positive integers $q$, $m$ and $t$ with $t\leq m$, let $K={m \choose t}q^t$. Then the array $\mathbf{P}$ generated by Construction \ref{construction} is a $(K,F,Z,S)$ PDA if and only if the row index matrix $\mathbf{F}$ is an OA$_{\lambda}(m,q,t)$ with $\lambda=\frac{F-Z}{(q-1)^t}$, i.e.,
every $1\times t$ row vector appears exactly $\lambda$ times in $\mathbf{F}|_{\mathcal{T}}$ for any $t$-subset $\mathcal{T}\subseteq [0,m)$.
\end{theorem}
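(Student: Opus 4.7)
The plan is to show that Proposition~\ref{pro-property-integer} already provides condition~C1 of Definition~\ref{def-PDA} for the array $\mathbf{P}$ produced by Construction~\ref{construction}, regardless of the choice of the row index matrix. So the content of the theorem is an equivalence between condition~C2 (every column has the same number $Z$ of stars) and the orthogonal-array property of $\mathbf{F}$, with the index $\lambda$ forced by a simple count. I would set up the proof by first fixing a column $(\mathcal{T},{\bf b})\in \mathcal{K}$ and expressing the number of non-star entries in that column in terms of how the rows of $\mathbf{F}$ restrict to $\mathcal{T}$.

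Concretely, for a subset $\mathcal{T}\in\binom{[0,m)}{t}$ and a vector ${\bf c}\in[0,q)^t$, define
\[
\mu(\mathcal{T},{\bf c}) \;=\; \bigl|\{\,{\bf f}\in\mathcal{F}:\, {\bf f}|_{\mathcal{T}}={\bf c}\}\bigr|.
\]
By \eqref{eq-constr-PDA}, a row ${\bf f}$ contributes a non-star entry in column $(\mathcal{T},{\bf b})$ precisely when ${\bf f}|_{\mathcal{T}}$ disagrees with ${\bf b}$ in every coordinate, so the number of non-stars in that column equals
\[
N(\mathcal{T},{\bf b}) \;=\; \sum_{{\bf c}\in[0,q)^t:\, c_h\neq b_h\ \forall h}\,\mu(\mathcal{T},{\bf c}).
\]
Condition~C2 for $\mathbf{P}$ is exactly the requirement that $N(\mathcal{T},{\bf b})=F-Z$ for every $(\mathcal{T},{\bf b})$.

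The forward direction is then immediate: if $\mathbf{F}$ is an OA$_\lambda(m,q,t)$, then $\mu(\mathcal{T},{\bf c})=\lambda$ for all $\mathcal{T}$ and ${\bf c}$, so $N(\mathcal{T},{\bf b})=(q-1)^t\lambda$ independently of $(\mathcal{T},{\bf b})$, giving $Z=F-(q-1)^t\lambda$ as asserted. The real content is the converse, and this is what I expect to be the main obstacle. Fix $\mathcal{T}$, regard $\mu(\mathcal{T},\cdot)$ as a vector in $\mathbb{R}^{[0,q)^t}$, and interpret the map ${\bf b}\mapsto N(\mathcal{T},{\bf b})$ as a linear operator applied to $\mu(\mathcal{T},\cdot)$. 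The operator factors as the $t$-fold tensor power of the $q\times q$ matrix $J-I$ (all-ones minus identity), whose eigenvalues are $q-1$ and $-1$ (the latter with multiplicity $q-1$); in particular $J-I$ is invertible for every $q\ge 2$, hence so is $(J-I)^{\otimes t}$.

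Consequently the equation ``$N(\mathcal{T},{\bf b})=F-Z$ for all ${\bf b}$'' has a unique solution $\mu(\mathcal{T},\cdot)$, and the constant function $\mu(\mathcal{T},{\bf c})\equiv(F-Z)/(q-1)^t$ is such a solution; by uniqueness it is the only one. Since $\mu(\mathcal{T},{\bf c})$ must be a non-negative integer for every ${\bf c}$, this forces $(q-1)^t\mid(F-Z)$, and setting $\lambda=(F-Z)/(q-1)^t$ shows that every $t$-tuple appears exactly $\lambda$ times in $\mathbf{F}|_{\mathcal{T}}$. Letting $\mathcal{T}$ range over $\binom{[0,m)}{t}$ yields the OA$_\lambda(m,q,t)$ property and completes the equivalence. (An alternative route that avoids tensor-power eigenvalues is to apply inclusion-exclusion to rewrite $N(\mathcal{T},{\bf b})$ in terms of the values $\nu(\mathcal{S},{\bf b}|_\mathcal{S})=|\{{\bf f}\in\mathcal{F}:\,{\bf f}|_{\mathcal{S}}={\bf b}|_\mathcal{S}\}|$ and invert a Möbius-type system by induction on $t$; both approaches lead to the same conclusion.)
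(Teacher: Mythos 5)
Your proposal is correct and follows essentially the same route as the paper: both reduce the statement to the uniformity of the column star-counts, set up the linear system $\Phi_t\,\mu(\mathcal{T},\cdot)=(F-Z)\mathbf{1}$ with $\Phi_t=(J-I)^{\otimes t}$ (the paper writes this matrix via the recursive block form in \eqref{Phit}), and conclude by invertibility that the constant solution is the only one. The only cosmetic difference is that you certify invertibility via the eigenvalues $q-1$ and $-1$ of $J-I$, while the paper uses a determinant recursion $|\Phi_t|=(q-1)(-1)^{q-1}|\Phi_{t-1}|^q$.
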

\begin{proof}
First we assume that $\mathbf{P}$ is a $(K,F,Z,S)$ PDA with $K={m \choose t}q^t$. By Construction \ref{construction},  the column index set is $\mathcal{K}={[0,m)\choose t}\times [0,q)^t$ and the row index matrix is an $F\times m$ matrix $\mathbf{F}$.
For each $\mathcal{T}\in {[0,m)\choose t}$ and for each ${\bf b}\in[0,q)^t$,  let  $h_{\mathcal{T},{\bf b}}$ denote the number of occurrences of ${\bf b}$ in $\mathbf{F}|_{\mathcal{T}}$.

When $t=1$, for each $\mathcal{T}\in {[0,m)\choose t}$, we build a binary table  (see Table \ref{tablet1}), where the entry at row $\{\mathcal{T}, {\bf b}\}$ and column $\mathbf{f}|_{\mathcal{T}}$ is $0$ if $p_{\mathbf{f}, (\mathcal{T}, {\bf b})} = *$, otherwise it is $1$. According to Table \ref{tablet1}, a $q\times q$ binary matrix $\Phi_1$ is defined in \eqref{Phi1}.
\begin{table}[h]
  \centering
  \caption{The binary table for $t=1$ \label{tablet1}}
  \begin{tabular}{|c|c|c|c|c|}
\hline
\diagbox{$\{\mathcal{T}, {\bf b}\}$}{${\bf f}|_{\mathcal{T}}$}
%\multicolumn{1}{|c|}{ \multirow{2}*{Columns} } &     \multicolumn{4}{c|}{ \multirow{1}*{${\bf f}|_{\mathcal{T}}$}}      \\ \cline{2-5}
  &  $(0)$   &    $(1)$ &   $\cdots$           & $(q-1)$   \\
\hline
$(\mathcal{T},(0))$ &$0$ &  $1$ &   $\cdots$ & $1$ \\
$(\mathcal{T},(1))$ &$1$ &  $0$ &   $\cdots$ & $1$ \\
$\vdots$       & $\vdots$&$\vdots$&$\ddots$&$\vdots$  \\
$(\mathcal{T},(q-1))$ &$1$ &  $1$ &   $\cdots$ & $0$ \\
\hline
   \end{tabular}
\end{table}
\begin{eqnarray}
\label{Phi1}
\Phi_1=\left(\begin{array}{cccc}
0      &     1    &   \cdots    & 1 \\
1      &     0    &   \cdots    & 1 \\
\vdots &   \vdots &   \ddots    & \vdots \\
 1     &    1     &  \cdots     &   0
\end{array}
\right)
\end{eqnarray}

When $t>1$, Table \ref{tabletn+1} can be obtained similarly, where the row and column labels are arranged in lexicographic order. Then a $q^t\times q^t$ binary matrix $\Phi_t$ is recursively defined in \eqref{Phit}.
\begin{table}[http!]
  \centering
  \caption{The binary table for any $t>1$ \label{tabletn+1}}
  \begin{tabular}{|c|c |c|c|c|}
\hline
\diagbox{$\{\mathcal{T}, {\bf b}\}$}{${\bf f}|_{\mathcal{T}}$}
& \tabincell{l}{$(0,b_1,\ldots,b_{t-1})$,  \\ $b_h\in [0,q)$, $h\in[1,t)$}&
 \tabincell{l}{$(1,b_1,\ldots,b_{t-1})$,\\ $b_h\in [0,q)$, $h\in[1,t)$} &$\cdots$&
 \tabincell{l}{$(q-1,b_1,\ldots,b_{t-1})$,\\ $b_h\in [0,q)$, $h\in[1,t)$}   \\
\hline
 \tabincell{l}{$(\mathcal{T},(0,b_1,\ldots,b_{t-1})),$\\ $b_h\in [0,q)$, $h\in[1,t)$}
 & { $0$} & { $\Phi_{t-1}$} & $\cdots$ & { $\Phi_{t-1}$}\\%
\hline
 \tabincell{l}{$(\mathcal{T},(1,b_1,\ldots,b_{t-1})),$\\ $b_h\in [0,q)$, $h\in[1,t)$}
 & { $\Phi_{t-1}$} & { $0$} & $\cdots$ & { $\Phi_{t-1}$}\\  \hline
 $\vdots$&$\vdots$&$\vdots$&$\ddots$&$\vdots$\\ \hline
 \tabincell{l}{$(\mathcal{T},(q-1,b_1,\ldots,b_{t-1})),$\\ $b_h\in [0,q)$, $h\in[1,t)$}
 & { $\Phi_{t-1}$} & { $\Phi_{t-1}$}& $\cdots$ &{ $0$}\\ \hline
   \end{tabular}
\end{table}

\begin{eqnarray}
\label{Phit}
\Phi_{t}=\left(\begin{array}{cccc}
\Large {0}      &     \Large {\Phi_{t-1}}    &   \cdots    & \Large {\Phi_{t-1}} \\
\Large {\Phi_{t-1}}      &    \Large{ 0}    &   \cdots    & \Large {\Phi_{t-1}}\\
\vdots &   \vdots &   \ddots    & \vdots \\
\Large {\Phi_{t-1}}     &   \Large {\Phi_{t-1}}    &  \cdots     &  \Large{ 0}
\end{array}
\right)
\end{eqnarray}

Let $${\bf h}_{\mathcal{T},t}=(h_{\mathcal{T},(0,0,\ldots,0)},h_{\mathcal{T},(0,0,\ldots,1)},\ldots,h_{\mathcal{T},(q-1,q-1,\ldots,q-1)})^{\top}.$$
Since the number of non-star entries in each column of $\mathbf{P}$ is $F-Z$, from Table \ref{tabletn+1},  we have
\begin{equation}
\label{et1}
 \Phi_t{\bf h}_{\mathcal{T},t}=(F-Z)\cdot {\bf 1}_{q^t\times 1},
\end{equation}
where ${\bf 1}_{q^t\times 1}$ is the $q^t\times 1$ vector with all entries equal to $1$. Obviously, $\Phi_1$ in \eqref{Phi1} is a circulant matrix and it is easy to check that it is invertible. Using elementary row operations the determinant of  $\Phi_t$ can be obtained as $|\Phi_t|=(q-1)(-1)^{q-1}|\Phi_{t-1}|^q$. Hence $\Phi_t$ is also invertible for $t >1$. Furthermore, the number of $1$ in each row of $\Phi_t$ is $(q-1)^t$. Consequently the unique solution of \eqref{et1} is $$h_{\mathcal{T},(0,0,\ldots,0)}=h_{\mathcal{T},(0,0,\ldots,1)}=\ldots=h_{\mathcal{T},(q-1,q-1,\ldots,q-1)}=\frac{F-Z}{(q-1)^t}.$$ So the row index matrix $\mathbf{F}$ is an OA$_{\lambda}(m,q,t)$ with $\lambda=\frac{F-Z}{(q-1)^t}$.

Conversely,  we assume that the row index matrix $\mathbf{F}$ is an OA$_{\lambda}(m,q,t)$. Then we have $F=\lambda q^t$ and there are $\lambda (q-1)^t$ non-star entries in each column of $\mathbf{P}$ by Construction \ref{construction}. This implies that there are $\lambda q^t-\lambda (q-1)^t$ stars in each column, i.e. $Z=\lambda q^t-\lambda (q-1)^t$. So $\mathbf{P}$ is a $(K,F,Z,S)$ PDA with $K={m \choose t}q^t$ and $\frac{Z}{F}=1-(\frac{q-1}{q})^t$ by Proposition \ref{pro-property-integer}. Hence the proof is complete.
\end{proof}
From the above proof, the following result is obvious.
\begin{remark}
\label{remark-caching-OA}
If the row index matrix $\mathbf{F}$ is an OA$_{\lambda}(m,q,t)$ and the column index set $\mathcal{K}={[0,m)\choose t}\times [0,q)^t$, the array $\mathbf{P}$ generated by Construction \ref{construction} is a $(K,F,Z,S)$ PDA with $K={m \choose t}q^t$ and $\frac{Z}{F}=1-(\frac{q-1}{q})^t$.
\end{remark}

\subsection{Necessary conditions for the row index matrix when the coded gain is the maximum}
\label{sub-lagest-gain}
From Remark \ref{con-PDA}, the number of  occurrences  (i.e. the coded gain) of each vector in the array $\mathbf{P}$ is at most ${m\choose t}$, i.e., the largest coded gain is ${m\choose t}$. In practice, for the fixed number of users, subpacketization and memory ratio, we prefer the coded gain as large as possible.
\begin{theorem}
\label{pro-property-gain}
%{\color{red}For any positive integers $q$, $m$ and $t$ with $t\leq m$, let the column index set $\mathcal{K}$ be ${[0,m)\choose t}\times [0,q)^t$ and the row index matrix $\mathbf{F}$ be an OA$_{\lambda}(m,q,t)$. If each vector $({\bf e},n_{{\bf e}})$ in the PDA $\mathbf{P}$ generated by Construction \ref{construction} occurs exactly ${m\choose t}$ times, i.e. the coded gain is ${m \choose t}$, then $\mathbf{F}$ is a CA$(F,m,q,m-t)$, i.e., every $1\times (m-t)$ row vector appears at least once in $\mathbf{F}|_{\mathcal{S}}$ for each $(m-t)$-subset $\mathcal{S}\subseteq [0,m)$.}

For any positive integers $q$, $m$ and $t$ with $t\leq m$, let the array $\mathbf{P}$ generated by Construction \ref{construction} be a $(K,F,Z,S)$ PDA with $K={m \choose t}q^t$. If each vector $({\bf e},n_{{\bf e}})$ in $\mathbf{P}$ occurs exactly ${m\choose t}$ times, i.e. the coded gain is ${m \choose t}$, then we have
\begin{itemize}
\item [1)]the row index matrix $\mathbf{F}$ is a CA$(F,m,q,m-t)$, i.e., every $1\times (m-t)$ row vector appears at least once in $\mathbf{F}|_{\mathcal{S}}$ for any $(m-t)$-subset $\mathcal{S}\subseteq [0,m)$;
\item [2)]the subpacketization $F\geq q^{m-t}$.
\end{itemize}
\end{theorem}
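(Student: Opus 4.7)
The plan is to prove part (1)—that the row index matrix $\mathbf{F}$ is a covering array CA$(F,m,q,m-t)$—and then deduce part (2) directly from the standard lower bound $F\ge q^{m-t}$ for any such covering array (noted right after Definition \ref{defi-OA}).

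For part (1), I would fix an arbitrary $\mathcal{S}\in{[0,m)\choose m-t}$ with complement $\mathcal{T}=[0,m)\setminus\mathcal{S}$, and set $\mathcal{V}_{\mathcal{S}}=\{\mathbf{f}|_{\mathcal{S}}:\mathbf{f}\in\mathcal{F}\}$. The goal is to show $\mathcal{V}_{\mathcal{S}}=[0,q)^{m-t}$. The main ingredient will be the following expansion property: whenever $\mathbf{u}\in\mathcal{V}_{\mathcal{S}}$ and $\mathbf{u}'\in[0,q)^{m-t}$ satisfy $d(\mathbf{u},\mathbf{u}')\le t$, then $\mathbf{u}'\in\mathcal{V}_{\mathcal{S}}$.

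To establish this expansion, pick $\mathbf{f}\in\mathcal{F}$ with $\mathbf{f}|_{\mathcal{S}}=\mathbf{u}$ and build an auxiliary vector $\mathbf{e}\in[0,q)^m$ by setting $\mathbf{e}|_{\mathcal{S}}=\mathbf{u}'$ and letting $\mathbf{e}|_{\mathcal{T}}$ differ from $\mathbf{f}|_{\mathcal{T}}$ in exactly $t-d(\mathbf{u},\mathbf{u}')$ coordinates (permissible because $0\le t-d(\mathbf{u},\mathbf{u}')\le t=|\mathcal{T}|$ and $q\ge 2$). Then $d(\mathbf{e},\mathbf{f})=t$, so with $\mathcal{T}^{*}=\{i:e_i\ne f_i\}$ the entry $p_{\mathbf{f},(\mathcal{T}^{*},\mathbf{e}|_{\mathcal{T}^{*}})}$ is a non-star cell carrying some label $(\mathbf{e},n_{\mathbf{e}})$. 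By hypothesis this label occurs ${m\choose t}$ times in $\mathbf{P}$; by Proposition \ref{pro-property-integer} these occurrences lie in ${m\choose t}$ distinct columns; and by reading off Construction \ref{construction}, the only columns where $\mathbf{e}$ can possibly appear are $\{(\mathcal{T}',\mathbf{e}|_{\mathcal{T}'}):\mathcal{T}'\in{[0,m)\choose t}\}$, of which there are exactly ${m\choose t}$. Hence the label must occupy every one of these columns, in particular column $(\mathcal{T},\mathbf{e}|_{\mathcal{T}})$, and the row $\mathbf{f}'$ hosting that occurrence satisfies $\mathbf{f}'|_{\mathcal{S}}=\mathbf{e}|_{\mathcal{S}}=\mathbf{u}'$, yielding $\mathbf{u}'\in\mathcal{V}_{\mathcal{S}}$.

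Finally, since $\mathcal{F}$ is nonempty and $t\ge 1$, I can start from any element of $\mathcal{V}_{\mathcal{S}}$ and reach an arbitrary target $\mathbf{v}\in[0,q)^{m-t}$ by flipping a single coordinate at a time—each such step has Hamming distance $1\le t$—so iterating the expansion property yields $\mathcal{V}_{\mathcal{S}}=[0,q)^{m-t}$, establishing the CA property and, via the standard bound, part (2). The main obstacle will be the expansion step itself: one has to choose $\mathbf{e}$ so that the maximum possible coded gain ${m\choose t}$ matches the number of candidate columns available to $\mathbf{e}$ exactly, and this is precisely what forces $\mathbf{e}$ to appear in the specific column $(\mathcal{T},\mathbf{e}|_{\mathcal{T}})$ whose row projection delivers $\mathbf{u}'$.
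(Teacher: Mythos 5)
Your proof is correct, and while it rests on the same engine as the paper's proof---the observation that a label $({\bf e},n_{\bf e})$ can only live in the ${m\choose t}$ columns $(\mathcal{T}',{\bf e}|_{\mathcal{T}'})$, so a coded gain of exactly ${m\choose t}$ forces it into \emph{every} such column, and the occurrence in column $(\mathcal{T},{\bf e}|_{\mathcal{T}})$ hands you a row agreeing with ${\bf e}$ outside $\mathcal{T}$---you package that engine quite differently. The paper first invokes Theorem \ref{pro-property-OA} to get that $\mathbf{F}$ is an OA$_\lambda(m,q,t)$, splits into the cases $t\geq m-t$ (where the OA property alone already gives the CA property) and $t<m-t$, and in the latter case runs a coordinate-by-coordinate chase toward a row whose first $m-t$ entries are a prescribed vector, seeding the chase with a row supplied by the OA property. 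You instead prove a clean one-step closure property of the projection set $\mathcal{V}_{\mathcal{S}}$ under Hamming balls of radius $t$ and finish by connectivity of the Hamming graph; this needs only that $\mathcal{F}$ is nonempty, avoids the case split, and does not rely on Theorem \ref{pro-property-OA} at all (so it would even apply to a $(K,F,S)$ PDA without the constant-$Z$ assumption). The one step you leave implicit is that all ${m\choose t}$ candidate columns $(\mathcal{T}',{\bf e}|_{\mathcal{T}'})$ are actually present in $\mathcal{K}$, but this is immediate from $|\mathcal{K}|={m\choose t}q^t$ (and is in any case forced by your own counting of the ${m\choose t}$ distinct columns hosting the label), so it is not a gap.
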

\begin{proof}
Since the array $\mathbf{P}$ generated by Construction \ref{construction} is a $(K,F,Z,S)$ PDA with $K={m \choose t}q^t$, we have the column index set $\mathcal{K}={[0,m)\choose t}\times [0,q)^t$ by Construction \ref{construction}. Moreover, the row index matrix $\mathbf{F}$ is an OA$_{\lambda}(m,q,t)$ by Theorem \ref{pro-property-OA}.

When $t\geq m-t$, the result holds because  an OA$_{\lambda}(m,q,t)$ is also an OA$_{\lambda'}(m,q,m-t)$ with $\lambda'=\lambda q^{2t-m}$. When $t<m-t$, without loss of generality, let $\overline{\mathcal{T}}=\{0,1,\ldots,m-t-1\}$, we need to check that any $(m-t)$-dimensional row vector,  without loss of generality,  say  $(0,0,\ldots,0)$,  is a row vector of $\mathbf{F}|_{\overline{\mathcal{T}}}$.

Since $\mathbf{F}$ is an OA$_{\lambda}(m,q,t)$, there exists  a row vector ${\bf f}_0=(f_{0,0},f_{0,1},\ldots,f_{0,m-1})=(0,\ldots,0,f_{0,t},\cdots,f_{0,m-1})$ in $\mathcal{F}$. Let $i_0$ be the smallest coordinate satisfying $f_{0,i_0}\neq 0$. If $i_0\geq m-t$, then the result holds. Otherwise if $i_0< m-t$, then we choose $\mathcal{T}_0=\{i_0,m-t+1,\ldots,m-1\}$ and ${\bf b}_0=(0,b_{0,1},\ldots,b_{0,t-1})$ such that $b_{0,j}\neq f_{0,m-t+j}$ for $1\leq j<t$. By Construction \ref{construction}, there is a vector ${\bf e}_0=(e_{0,0},e_{0,1},\ldots,e_{0,m-1})=(0,\ldots,0,f_{0,i_0+1},\ldots,f_{0,m-t},b_{0,1},\ldots,b_{0,t-1})$ occurring at row ${\bf f}_0$ and column $(\mathcal{T}_0,{\bf b}_0)$.

Since the coded gain is ${m\choose t}$, each vector occurring in $\mathbf{P}$ must appear in exactly ${m\choose t}$ columns. Let $\mathcal{T}_1=\{m-t,m-t+1,\ldots,m-1\}$. Then ${\bf e}_0$ must appear at column $(\mathcal{T}_1,{\bf e}_0|_{\mathcal{T}_1})$ and row
$${\bf f}_1=(f_{1,0}, f_{1,1}, \ldots, f_{1,m-1})=(0,\ldots,0,e_{0,i_0+1},\ldots,e_{0,m-t-1},f_{1,m-t},\ldots,f_{1,m-1})$$
with $f_{1,j}\neq e_{0,j}$ for $j\in[m-t,m)$. Let $i_1$ be the smallest coordinate satisfying $f_{1,i_1}\neq 0$, i.e., ${\bf f}_1=(0,\ldots,0,f_{1,i_1},\ldots,f_{1,m-1})$, clearly $i_1>i_0$. If $i_1\geq m-t$, the result holds. Otherwise if $i_1< m-t$, we can repeat the same process until we obtain a row ${\bf f}_u=(0,\ldots,0,f_{u,i_u},\ldots,f_{u,m-1})$ with $f_{u,i_u}\neq 0$ and $i_u\geq m-t$.  This shows that $(0, 0, \ldots, 0)$ is a row vector of   $\mathbf{F}|_{\overline{\mathcal{T}}}$.  Hence the row index matrix $\mathbf{F}$ is a CA$(F,m,q,m-t)$, which implies $F\geq q^{m-t}$.
The proof is complete.
\end{proof}

In summary, the following result is obtained.
\begin{theorem}
\label{lowerbounds}
For any positive integers $q$, $m$ and $t$ with $t\leq m$, let the array $\mathbf{P}$ generated by Construction \ref{construction} be a $(K,F,Z,S)$ PDA with $K={m \choose t}q^t$. The coded caching scheme realized by $\mathbf{P}$ has memory ratio $\frac{Z}{F}=1-(\frac{q-1}{q})^t$ and transmission load $R\geq (q-1)^t$. Furthermore, if the transmission load achieves the lower bound, i.e. $R= (q-1)^t$, then the subpacketization $F\geq q^{m-t}$.
\end{theorem}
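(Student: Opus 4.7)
The plan is to decompose the theorem into its three assertions and dispatch each by invoking the characterizations already established earlier in this section. First, since $\mathbf{P}$ is a $(K,F,Z,S)$ PDA with $K={m\choose t}q^t$, Theorem~\ref{pro-property-OA} forces the row index matrix $\mathbf{F}$ to be an OA$_{\lambda}(m,q,t)$ with $\lambda=(F-Z)/(q-1)^t$. Remark~\ref{remark-caching-OA} then immediately reads off the memory ratio $Z/F=1-((q-1)/q)^t$, settling the first assertion without further work.

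For the transmission load inequality, I would use a simple double count on the non-star entries of $\mathbf{P}$. There are $K={m\choose t}q^t$ columns, each containing exactly $F-Z$ non-star entries, for a total of $(F-Z){m\choose t}q^t$ integer entries. By Remark~\ref{con-PDA}, each distinct integer symbol occurs in $\mathbf{P}$ at most ${m\choose t}$ times (this is the ``largest possible coded gain'' already observed right after Proposition~\ref{pro-property-integer}), so the number of symbols satisfies $S\cdot{m\choose t}\geq (F-Z){m\choose t}q^t$. Dividing by ${m\choose t}F$ and substituting the ratio $Z/F$ computed in the first step yields
\[
R \;=\; \frac{S}{F} \;\geq\; \Bigl(1-\frac{Z}{F}\Bigr)q^t \;=\; \Bigl(\frac{q-1}{q}\Bigr)^t q^t \;=\; (q-1)^t.
\]

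For the final assertion, I would simply trace the equality case. If $R=(q-1)^t$, then the inequality $S\cdot{m\choose t}\geq(F-Z){m\choose t}q^t$ must be tight, which forces every integer symbol of $\mathbf{P}$ to occur exactly ${m\choose t}$ times; that is, the coded gain attains its maximum ${m\choose t}$. This is precisely the hypothesis of Theorem~\ref{pro-property-gain}, whose conclusion gives $F\geq q^{m-t}$. I do not anticipate any real obstacle here, as the statement is essentially a packaging of Theorem~\ref{pro-property-OA}, the coded-gain cap from Remark~\ref{con-PDA}, and Theorem~\ref{pro-property-gain}; the only point requiring minor care is to make the equality case of the double-count explicit so that the maximal-coded-gain hypothesis of Theorem~\ref{pro-property-gain} is genuinely activated.
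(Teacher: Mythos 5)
Your proposal is correct and follows essentially the same route as the paper: the memory ratio via Theorem~\ref{pro-property-OA} and Remark~\ref{remark-caching-OA}, the bound $R\geq(q-1)^t$ by counting the $K(F-Z)$ non-star entries against the coded-gain cap ${m\choose t}$ from Remark~\ref{con-PDA}, and the equality case feeding into Theorem~\ref{pro-property-gain}. Your explicit note that tightness of the count forces \emph{every} symbol to attain gain ${m\choose t}$ is exactly the (implicit) step the paper relies on.
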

\begin{proof}
From Theorem \ref{pro-property-OA} and Remark \ref{remark-caching-OA}, the coded caching scheme realized by $\mathbf{P}$ has the memory ratio $\frac{Z}{F}=1-(\frac{q-1}{q})^t$. From Remark \ref{con-PDA}, each vector $({\bf e},n_{{\bf e}})$ in $\mathbf{P}$ occurs at most ${m\choose t}$ times, which implies that the coded gain is at most ${m\choose t}$. Hence $S\geq \frac{K(F-Z)}{{m\choose t}}=F(q-1)^t$. Consequently we have $R=\frac{S}{F}\geq (q-1)^t$. If  the transmission load achieves the lower bound, i.e.,  $R= (q-1)^t$, then  the coded gain is ${m\choose t}$. Hence we have $F\geq q^{m-t}$ from Theorem \ref{pro-property-gain}.
\end{proof}

\section{New schemes via orthogonal arrays}
\label{sec-OAs}
In this section,  we demonstrate our framework by presenting  two more explicit new schemes that use different OAs as the row index matrix $\mathbf{F}$ in  Construction \ref{construction}.
These new schemes have significantly advantages in subpacketization compared with the schemes in \cite{SZG}.
Again, the column index set $\mathcal{K}$ is  the set ${[0,m)\choose t}\times [0,q)^t$.
\subsection{New scheme via OA$(m,q,m-1)$}
In this subsection, we use a trivial orthogonal array OA$(m,q,m-1)$ as the row index matrix $\mathbf{F}$. Then the array $\mathbf{P}$
generated by Construction \ref{construction} has the following properties.
\begin{lemma}
\label{le-occurrence number-t}
  For any positive integers $m$, $t$ with $t\leq m$ and $q\geq 2$, let the column index set $\mathcal{K}$ be  ${[0,m)\choose t}\times [0,q)^t$ and the row index matrix $\mathbf{F}$ be an OA$(m,q,m-1)$. For each vector $({\bf e}, n_{{\bf e}})$ in the array $\mathbf{P}$ generated by Construction \ref{construction}, we have
\begin{itemize}
\item [1)] ${\bf e}$ occurs in exactly ${m\choose t}$ columns;
\item [2)]  The number of times in each column that ${\bf e}$ occurs is the same.
\end{itemize}
\end{lemma}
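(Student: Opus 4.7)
The plan is to reduce both parts of the lemma to a single counting identity: for every ${\bf e}\in[0,q)^m$ and every $t$-subset $\mathcal{T}\subseteq[0,m)$, the number of times ${\bf e}$ appears in column $(\mathcal{T},{\bf b})$ of $\mathbf{P}$ has a closed form that depends on ${\bf e}$ only through whether ${\bf e}\in\mathcal{F}$. First, by the rule \eqref{eq-putting integer}, an occurrence $p_{{\bf f},(\mathcal{T},{\bf b})}=({\bf e},n_{{\bf e}})$ forces ${\bf b}={\bf e}|_{\mathcal{T}}$ together with ${\bf f}|_{[0,m)\setminus\mathcal{T}}={\bf e}|_{[0,m)\setminus\mathcal{T}}$ and ${\bf f}|_{\mathcal{T}}$ differing from ${\bf e}|_{\mathcal{T}}$ in every coordinate. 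Hence ${\bf e}$ can appear in at most one column for each choice of $\mathcal{T}$, namely $(\mathcal{T},{\bf e}|_{\mathcal{T}})$, and the number of its occurrences in that column is exactly
$$
N_{\mathcal{T}}({\bf e})=\bigl|\{{\bf f}\in\mathcal{F}:f_i=e_i\text{ for }i\notin\mathcal{T},\ f_i\neq e_i\text{ for }i\in\mathcal{T}\}\bigr|.
$$

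Next, I would evaluate $N_{\mathcal{T}}({\bf e})$ by inclusion-exclusion on the $t$ disagreement conditions indexed by $\mathcal{T}$. For any $\mathcal{U}\subseteq\mathcal{T}$, the rows ${\bf f}\in\mathcal{F}$ with $f_i=e_i$ for every $i\in([0,m)\setminus\mathcal{T})\cup\mathcal{U}$ have $m-t+|\mathcal{U}|$ coordinates fixed. Since the row index matrix is an OA$(m,q,m-1)$ of index one, fixing any $k\leq m-1$ coordinates to prescribed values yields exactly $q^{m-1-k}$ rows of $\mathbf{F}$, so for $|\mathcal{U}|\leq t-1$ the count is $q^{t-1-|\mathcal{U}|}$; when $\mathcal{U}=\mathcal{T}$ all $m$ coordinates are fixed to ${\bf e}$, and the count is $\mathbf{1}[{\bf e}\in\mathcal{F}]$. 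Putting the pieces together,
$$
N_{\mathcal{T}}({\bf e})=\sum_{u=0}^{t-1}(-1)^u{t\choose u}q^{t-1-u}+(-1)^t\mathbf{1}[{\bf e}\in\mathcal{F}]=\frac{(q-1)^t-(-1)^t}{q}+(-1)^t\mathbf{1}[{\bf e}\in\mathcal{F}],
$$
which is manifestly independent of $\mathcal{T}$. This immediately gives statement 2): in every column where ${\bf e}$ occurs, it occurs exactly this common number of times.

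For statement 1), the hypothesis that $({\bf e},n_{{\bf e}})$ appears in $\mathbf{P}$ gives $N_{\mathcal{T}_0}({\bf e})\geq 1$ for some $\mathcal{T}_0$; by the constancy just proved, $N_{\mathcal{T}}({\bf e})\geq 1$ for every $\mathcal{T}\in{[0,m)\choose t}$, so ${\bf e}$ appears in each of the ${m\choose t}$ potential columns $(\mathcal{T},{\bf e}|_{\mathcal{T}})$ and in no others. The only subtle point in the argument is the $\mathcal{U}=\mathcal{T}$ term of the inclusion-exclusion, where the OA strength $m-1$ falls just short of the $m$ coordinates being constrained; but because OA$(m,q,m-1)$ of index one consists of $q^{m-1}$ distinct $m$-vectors, fixing all $m$ coordinates to the entries of ${\bf e}$ contributes precisely $\mathbf{1}[{\bf e}\in\mathcal{F}]$, and no further obstacle arises.
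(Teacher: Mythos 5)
Your proof is correct, and it takes a genuinely different route from the paper's. The paper fixes ${\bf e}$ and two $t$-subsets $\mathcal{T},\mathcal{T}'$ with $|\mathcal{T}\cap\mathcal{T}'|=t-1$, uses the strength-$(m-1)$ property to build explicit injections in both directions between the row sets $\mathcal{F}_{\mathcal{T},{\bf e}}$ and $\mathcal{F}_{\mathcal{T}',{\bf e}}$, concludes they have equal size, and then chains along a path of such adjacent $t$-subsets to reach an arbitrary $\mathcal{T}''$. You instead evaluate $N_{\mathcal{T}}({\bf e})=|\mathcal{F}_{\mathcal{T},{\bf e}}|$ in closed form by inclusion--exclusion over the disagreement conditions, invoking the standard fact that an index-one OA of strength $m-1$ is an OA of strength $k$ with index $q^{m-1-k}$ for every $k\leq m-1$; the result is visibly independent of $\mathcal{T}$, which gives 2), and 1) follows since a column containing ${\bf e}$ must be of the form $(\mathcal{T},{\bf e}|_{\mathcal{T}})$ and every such column then has positive count. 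The one place the OA hypothesis does not directly apply is the boundary term $\mathcal{U}=\mathcal{T}$, and you handle it correctly: the rows of an index-one strength-$(m-1)$ OA are pairwise distinct, so that term contributes $\mathbf{1}[{\bf e}\in\mathcal{F}]$. Your approach buys strictly more than the paper's: the explicit multiplicity $\frac{(q-1)^t-(-1)^t}{q}+(-1)^t\mathbf{1}[{\bf e}\in\mathcal{F}]$ identifies exactly which vectors occur in $\mathbf{P}$ and how often, so $S=(q-1)^tq^{m-1}$ in Theorem \ref{th-s=m-1} can be read off by summing over ${\bf e}$ rather than deduced from the coded gain; the paper's bijective argument is computation-free but only establishes equality of unknown counts.
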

\begin{proof}
Recall that if $({\bf e},n_{\bf e})$ occurs in $\mathbf{P}$, we say ${\bf e}$ occurs in $\mathbf{P}$.
For any vector ${\bf e}=(e_0,\ldots,e_{m-1})$ occurring in $\mathbf{P}$,  we assume that $p_{{\bf f},(\mathcal{T},{\bf e}|_{\mathcal{T}})}=({\bf e},n_{{\bf e}})$ with ${\bf f}=(f_0,\ldots,f_{m-1})$. For  any fixed column $(\mathcal{T},{\bf e}|_{\mathcal{T}})$, the collection of rows in which ${\bf e}$ occurs is denoted by $\mathcal{F}_{\mathcal{T},{\bf e}}$, i.e.,
\begin{eqnarray*}
\mathcal{F}_{\mathcal{T},{\bf e}}=\{{\bf f}\in \mathcal{F}\ |\ f_i=e_i \ \textrm{for} \ i\in [0,m)\setminus \mathcal{T} \ \textrm{and} \ f_i\neq e_i \ \textrm{for} \ i\in \mathcal{T}\}
\end{eqnarray*}
For any $\mathcal{T}'\in {[0,m) \choose t}$ with $|\mathcal{T}\cap\mathcal{T}'|=t-1$, assume that $\mathcal{T}\setminus \mathcal{T}'=\{w\}$ and $\mathcal{T}'\setminus \mathcal{T}=\{w'\}$. Clearly, we have $w\neq w'$. For the  fixed column $(\mathcal{T}',{\bf e}|_{\mathcal{T}'})$, the collection of rows in which ${\bf e}$ occurs is
 \begin{eqnarray*}
\mathcal{F}_{\mathcal{T}',{\bf e}}=\{{\bf f}\in \mathcal{F}\ |\ f_i=e_i \ \textrm{for} \ i\in [0,m)\setminus \mathcal{T}' \ \textrm{and} \ f_i\neq e_i \ \textrm{for} \ i\in \mathcal{T}'\}
\end{eqnarray*}
Since $\mathbf{F}$ is an OA$(m,q,m-1)$, we can define a mapping $\psi$ from $\mathcal{F}_{\mathcal{T},{\bf e}}$ to $\mathcal{F}_{\mathcal{T}',{\bf e}}$ as follows
\begin{eqnarray*}
%\label{xi}
\psi({\bf f})= {\bf f}',\ \ \ \ \ \ \ \  {\bf f}\in \mathcal{F}_{\mathcal{T},{\bf e}},\ \ {\bf f}'\in\mathcal{F}_{\mathcal{T}',{\bf e}},
\end{eqnarray*}
where $f'_i=f_i$ for $i\in [0,m)\setminus \{w,w'\}$ and $f'_w=e_w$. Since every row vector in $[0,q)^{m-1}$ occurs exactly once in submatrix $\mathbf{F}|_{[0,m)\setminus\{w'\}}$, there is only one row vector ${\bf f}'\in\mathcal{F}$ corresponding to ${\bf f}$. Furthermore, we can show that $f'_{w'}\neq e_{w'}$, which implies ${\bf f}'\in \mathcal{F}_{\mathcal{T}',{\bf e}}$. Otherwise if $f'_{w'}= e_{w'}$,  then the $(m-1)$-dimensional vector ${\bf f}|_{[0,m)\setminus \{w\}}$ occurs at least twice in $\mathbf{F}|_{[0,m)\setminus \{w\}}$, which is contradictory to that $\mathbf{F}$ is an OA$(m,q,m-1)$. So $\psi$ is an injective mapping from $\mathcal{F}_{\mathcal{T},{\bf e}}$ to $\mathcal{F}_{\mathcal{T}',{\bf e}}$. Then we have $|\mathcal{F}_{\mathcal{T},{\bf e}}|\leq |\mathcal{F}_{\mathcal{T}',{\bf e}}|$. Similar to the above mapping $\psi$,  we can also define another injective mapping from $\mathcal{F}_{\mathcal{T}',{\bf e}}$ to $\mathcal{F}_{\mathcal{T},{\bf e}}$ as follows
\begin{eqnarray*}
%\label{xi'}
\phi({\bf f}')= {\bf f},\ \ \ \ \ \ \ \  {\bf f}'\in \mathcal{F}_{\mathcal{T}',{\bf e}},\ \ {\bf f}\in\mathcal{F}_{\mathcal{T},{\bf e}},
\end{eqnarray*}
where $f_i=f'_i$ for $i\in [0,m)\setminus \{w,w'\}$ and $f_{w'}=e_{w'}$.  Similarly, we have $|\mathcal{F}_{\mathcal{T},{\bf e}}|\geq |\mathcal{F}_{\mathcal{T}',{\bf e}}|$.  Therefore $|\mathcal{F}_{\mathcal{T},{\bf e}}|=|\mathcal{F}_{\mathcal{T}',{\bf e}}|$, which implies that ${\bf e}$ occurs the same number of times in columns $(\mathcal{T},{\bf e}|_{\mathcal{T}})$ and $(\mathcal{T}',{\bf e}|_{\mathcal{T}'})$, respectively.
For any $\mathcal{T}''\in {[0,m)\choose t}$,  we can find a sequence $\mathcal{T}_0, \ldots, \mathcal{T}_k\in {[0,m)\choose t}$ such that $\mathcal{T}_0 = \mathcal{T}$,  $\mathcal{T}_k = \mathcal{T}''$, and $|\mathcal{T}_i\cap\mathcal{T}_{i+1}|=t-1$ for $0\leq i \leq k-1$. Hence ${\bf e}$ occurs the same number of times in columns $(\mathcal{T},{\bf e}|_{\mathcal{T}})$ and $(\mathcal{T}'',{\bf e}|_{\mathcal{T}''})$ respectively.  The proof is complete.
\end{proof}
From Lemma \ref{le-occurrence number-t}, a new scheme can be obtained as follows.
\begin{theorem}
\label{th-s=m-1}
For any positive integers $m$, $t$ with $t< m$,  and $q\geq 2$, there always exists an $({m\choose t}q^t,q^{m-1}, q^{m-1}-(q-1)^tq^{m-t-1}, (q-1)^tq^{m-1})$ PDA, which can realize a $({m\choose t}q^t,M,N)$ coded caching scheme with $\frac{M}{N}=1-(\frac{q-1}{q})^t$, subpacketization $F=q^{m-1}$ and transmission load $R=(q-1)^t$.
\end{theorem}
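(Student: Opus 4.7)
The plan is to instantiate Construction \ref{construction} with the column index set $\mathcal{K}={[0,m)\choose t}\times[0,q)^t$ and the row index matrix $\mathbf{F}$ chosen to be any OA$(m,q,m-1)$. Such an OA exists for every $q\geq 2$ and every $m\geq 2$; the sum-zero hyperplane $\mathcal{F}=\{{\bf f}\in[0,q)^m:\sum_{i=0}^{m-1}f_i\equiv 0\pmod q\}$ is a canonical example. This immediately fixes $F=q^{m-1}$ and $K={m\choose t}q^t$, so it remains to pin down $Z$ and $S$.

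For $Z$, I would note that an OA$(m,q,m-1)$ is automatically an OA$_{\lambda}(m,q,t)$ with $\lambda=q^{m-1-t}$ (collapsing $m-1-t$ coordinates), and then apply Theorem \ref{pro-property-OA} (together with its proof, which shows $F-Z=\lambda(q-1)^t$). This gives $Z=q^{m-1}-(q-1)^tq^{m-t-1}$ and the caching ratio $M/N=Z/F=1-((q-1)/q)^t$ for free.

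The key part is identifying $S$. Lemma \ref{le-occurrence number-t} tells us that for every vector ${\bf e}$ occurring in $\mathbf{P}$, the set of columns in which ${\bf e}$ appears has size exactly ${m\choose t}$, and within each of these columns ${\bf e}$ appears the same number of times, say $c_{\bf e}$. Since Construction \ref{construction} assigns the subscripts $n_{\bf e}=0,1,\ldots,c_{\bf e}-1$ independently in each column (in the order of occurrence), this uniformity means that each distinct symbol $({\bf e},n_{\bf e})$ with $0\leq n_{\bf e}<c_{\bf e}$ appears exactly once in each of those ${m\choose t}$ columns. Hence every symbol in $\mathbf{P}$ has coded gain equal to the maximal value ${m\choose t}$ permitted by Remark \ref{con-PDA}. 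A double count of the non-star cells of $\mathbf{P}$ then closes the argument: row-by-row, each ${\bf f}\in\mathcal{F}$ contributes ${m\choose t}(q-1)^t$ non-stars (one for every pair $(\mathcal{T},{\bf b})$ with ${\bf b}$ differing from ${\bf f}|_{\mathcal{T}}$ in all $t$ coordinates), while symbol-by-symbol each of the $S$ symbols contributes exactly ${m\choose t}$ non-stars. Thus
\[
S\cdot{m\choose t}=F\cdot{m\choose t}(q-1)^t,
\]
giving $S=(q-1)^tq^{m-1}$ and $R=S/F=(q-1)^t$. Invoking Theorem \ref{th-Fundamental} converts the PDA into the claimed coded caching scheme.

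The main obstacle I anticipate is conceptual rather than computational: one must be careful that the ``same multiplicity per column'' conclusion of Lemma \ref{le-occurrence number-t} translates into a genuinely maximal coded gain for every individual symbol $({\bf e},n_{\bf e})$, not merely for ${\bf e}$ in the aggregate. This rests on the column-local convention for the occurrence order $n_{\bf e}$ built into Construction \ref{construction}, and is the pivotal observation that makes the double count collapse cleanly to $S=(q-1)^tq^{m-1}$. Everything else — the existence of an OA$(m,q,m-1)$, the PDA structure, and the calculation of $Z$ — is handled directly by previously established results in the paper.
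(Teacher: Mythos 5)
Your proposal is correct and follows essentially the same route as the paper: the same column index set, an OA$(m,q,m-1)$ given by a checksum hyperplane as the row index matrix, Theorem \ref{pro-property-OA} (via Remark \ref{remark-caching-OA}) for $Z$, and Lemma \ref{le-occurrence number-t} to conclude that every symbol has coded gain ${m\choose t}$, with your row-by-row double count being equivalent to the paper's computation $S=K(F-Z)/{m\choose t}$. Your explicit remark that the column-local numbering of $n_{\bf e}$ is what upgrades the per-vector uniformity of Lemma \ref{le-occurrence number-t} to a per-symbol gain of ${m\choose t}$ is a detail the paper leaves implicit, but it is the same argument.
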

\begin{proof}
Let the column index set $\mathcal{K}={[0,m)\choose t}\times [0,q)^t$ and
\begin{eqnarray}
\label{eq-generator-s=m-1}
\mathcal{F}=\{(f_{0},f_{1},\ldots,f_{m-2}, \sum_{i=0}^{m-2} f_i)\ |\  f_0,f_1,\ldots,f_{m-2}\in \mathbb{Z}_q\}.
\end{eqnarray}
Then  $F=q^{m-1}$. It is easy to check that the corresponding row index matrix $\mathbf{F}$, whose collection of row vectors is $\mathcal{F}$, is an OA$(m,q,m-1)$, which is also an OA$_{\lambda}(m,q,t)$ with $\lambda=q^{m-1-t}$. By Remark \ref{remark-caching-OA},  the array $\mathbf{P}$ generated by Construction \ref{construction} is a $(K,F,Z,S)$ PDA with memory ratio $\frac{M}{N}=\frac{Z}{F}=1-(\frac{q-1}{q})^t$. Furthermore, by Lemma \ref{le-occurrence number-t}, each vector $({\bf e}, n_{{\bf e}})$ in $\mathbf{P}$ occurs exactly ${m \choose t}$ times. Hence the coded gain is ${m \choose t}$. Consequently we obtain $S=\frac{K(F-Z)}{{m \choose t}}=F(q-1)^t$ and thus the transmission load is $R=\frac{S}{F}=(q-1)^t$.
\end{proof}
\begin{remark}
\label{rem-2}
\begin{itemize}
\item When $t=1$, the scheme in Theorem \ref{th-s=m-1} is exactly the one from \cite{YCTC}, which is listed in the third row of Table \ref{tab-known}. Moreover, it achieves both the lower bounds on transmission load $R$ and subpacketization $F$ in Theorem \ref{lowerbounds}.
\item  For any positive $1< t< m$, the scheme in Theorem \ref{th-s=m-1} has the same number of users $K={m\choose t}q^t$, memory ratio $\frac{M}{N}=1-(\frac{q-1}{q})^t$ and transmission load $R=(q-1)^t$ as the scheme from \cite{SZG}, which is listed in the fourth row of Table \ref{tab-known},  while its subpacketization is just $\frac{1}{q}$ times as that of the scheme from \cite{SZG}.
\item Let the column index set $\mathcal{K}={[0,m)\choose t}\times [0,q)^t$ and $\mathcal{F}=[0,q)^m$. It is easy to check that the array generated by Construction \ref{construction} is a $({m\choose t}q^t,q^m, q^m-(q-1)^tq^{m-t}, (q-1)^tq^m)$ PDA, which is exactly the second PDA in \cite{SZG}.
\end{itemize}
\end{remark}
\begin{remark}
Given a PDA, clearly its subarray is also a PDA. However it is worth noting that it is not easy to find out a rule of deletion such that the resulting subarray of the PDA in \cite{SZG} has the same performance as the PDA in Theorem \ref{th-s=m-1}. In fact the performance of a PDA depends heavily on how the entries are defined. For example, when $t=1$, the scheme in \cite{YCTC} has much smaller subpacketization than that of the scheme in \cite{SZG} for the same number of users, memory size and transmission load. Indeed, the construction of the scheme with $K=mq$ in \cite{YCTC} has to use two formulas, i.e., the first is for the users in $[0,(m-1)q)$ (i.e., formula (37) in \cite{YCTC}) and the second is for the users in $[(m-1)q,mq)$ (i.e., the formula (38) in \cite{YCTC}). In contrast,  our construction provides a simple rule to define these entries, i.e., \eqref{eq-putting integer} and \eqref{eq-constr-PDA} in Construction \ref{construction}.
%, which allows us to easily handle the key problem in the construction of PDAs.
\end{remark}
\subsection{New scheme via OA$(m,q,m-t)$}
\label{subsec-OA-MDS}
For any positive integers $m$ and $s$ with $s\leq m$ and for any prime power $q$, the following results will used in this subsection.
\begin{itemize}
\item Let $q$ be a prime power and  $\mathbb{F}^m_q$ denote the vector space of all $m$-tuples over the finite field $\mathbb{F}_q$. A vector set $\mathcal{C}$ is called an $[m,s]_q$ linear code over $\mathbb{F}_q$ if $\mathcal{C}$ is a $s$-dimensional subspace of $\mathbb{F}^m_q$. Each vector in $\mathcal{C}$ is called a codeword. An $[m,s]_q$ linear code $\mathcal{C}$ has $q^s$ codewords.
\item The minimum distance of a linear code is the smallest distance between distinct codewords. An $[m,s,d]_q$ code is an $[m,s]_q$ linear code $\mathcal{C}$ with minimum distance $d$. If $q$ is omitted, then it is understood to be $2$. An $[m,s,d]_q$ code is called an $[m,s]_q$ maximum distance separable (MDS) code if $d=m-s+1$. For each codeword ${\bf c}\in \mathcal{C}$, $\mathcal{E}_{\mathcal{C}}({\bf c},m-s)$ is the sphere with radius $m-s$ and centre ${\bf c}$, i.e., $\mathcal{E}_{\mathcal{C}}({\bf c},m-s)=\{{\bf e} \in\mathbb{F}^m_q \ |\ d({\bf e},{\bf c})\leq m-s\}$.
\end{itemize}

%The following useful result is needed later on.
\begin{lemma}
\label{lem-sphere}
For any positive integers $m$ and $s$ with $s \leq m$ and for any prime power $q$, let $\mathcal{C}$ be an $[m,s]_q$ MDS code, then $$\bigcup_{{\bf c}\in \mathcal{C}}\mathcal{E}_{\mathcal{C}}({\bf c},m-s)=\mathbb{F}^m_q.$$
\end{lemma}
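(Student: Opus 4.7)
The plan is to show that every ${\bf e}\in\mathbb{F}_q^m$ lies in $\mathcal{E}_{\mathcal{C}}({\bf c},m-s)$ for some ${\bf c}\in\mathcal{C}$, by exploiting the fact that any $s$ coordinate positions form an information set of an $[m,s]_q$ MDS code.

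First, I would record the following standard property of MDS codes: for any $s$-subset $\mathcal{S}\subseteq [0,m)$, the projection $\pi_{\mathcal{S}}:\mathcal{C}\to\mathbb{F}_q^s$ sending ${\bf c}\mapsto {\bf c}|_{\mathcal{S}}$ is a bijection. Since $|\mathcal{C}|=q^s=|\mathbb{F}_q^s|$, it suffices to check injectivity. If two distinct codewords ${\bf c},{\bf c}'\in\mathcal{C}$ satisfied ${\bf c}|_{\mathcal{S}}={\bf c}'|_{\mathcal{S}}$, then ${\bf c}-{\bf c}'$ would be a nonzero codeword vanishing on the $s$ positions of $\mathcal{S}$, hence of weight at most $m-s$. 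This contradicts the MDS minimum distance $d=m-s+1$.

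Second, given any ${\bf e}\in\mathbb{F}_q^m$, I would fix any $s$-subset $\mathcal{S}$ (for concreteness $\mathcal{S}=[0,s)$) and apply the bijection to obtain the unique ${\bf c}\in\mathcal{C}$ with ${\bf c}|_{\mathcal{S}}={\bf e}|_{\mathcal{S}}$. Then ${\bf c}$ and ${\bf e}$ agree on at least $s$ coordinates, so $d({\bf e},{\bf c})\leq m-s$, i.e., ${\bf e}\in\mathcal{E}_{\mathcal{C}}({\bf c},m-s)$. Since ${\bf e}$ was arbitrary, this gives the claimed covering $\bigcup_{{\bf c}\in\mathcal{C}}\mathcal{E}_{\mathcal{C}}({\bf c},m-s)=\mathbb{F}_q^m$; the reverse inclusion is trivial.

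There is no substantive obstacle: the entire content of the lemma is that the covering radius of an MDS code is at most $m-s$, which is a one-line consequence of the information-set property. The only care needed is to state the information-set fact cleanly before invoking it, since the paper has only defined MDS codes through the parameter equation $d=m-s+1$.
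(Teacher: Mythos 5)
Your proof is correct: the information-set property (any $s$ coordinates determine a codeword, hence the projection onto any $s$-subset is a bijection) immediately gives covering radius at most $m-s$, which is the whole content of the lemma. The paper actually states Lemma \ref{lem-sphere} without proof, but the "key property" it cites in the very next paragraph --- that each codeword of an $[m,s]_q$ MDS code is determined by any $s$ of its coordinates --- is exactly the fact you isolate and prove, so your argument supplies precisely the justification the paper leaves implicit.
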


A key property of an $[m,s]_q$ MDS code is that each codeword in $\mathcal{C}$ is determined by its any $s$ coordinates \cite{Lint}. Based on this property, given an $[m,s]_q$ MDS code $\mathcal{C}$, an OA$(m,q,s)$ can be obtained by taking the codewords in $\mathcal{C}$ as its row vectors \cite{Stinson}.

For any positive integers $m$ and $t$ with $t<m$, there exists an $[m,m-t]_q$ MDS code $\mathcal{C}$ for some prime power $q$. Let $\mathcal{F}=\mathcal{C}$ and the column index set $\mathcal{K}={[0,m)\choose t}\times [0,q)^t$, then we have $F=q^{m-t}$ and the corresponding row index matrix $\mathbf{F}$, whose collection of row vectors is $\mathcal{C}$, is an OA$(m,q,m-t)$. If $m-t\geq t$, i.e., $t\leq \frac{m}{2}$, the row index matrix $\mathbf{F}$ is also an OA$_{\lambda}(m,q,t)$ with $\lambda=q^{m-2t}$.  By Remark \ref{remark-caching-OA},  the array $\mathbf{P}$ generated from Construction \ref{construction} is a $(K,F,Z,S)$ PDA with $K={m\choose t}q^{t}$ and $Z/F=1-(\frac{q-1}{q})^{t}$. Furthermore, we obtain the following result.
\begin{theorem}
\label{th-reult-m-2}
Let $m$, $t$ be positive integers with $m\geq 2t$  and $q\geq2$  be a  prime power. If there exists an $[m,m-t]_q$ MDS code $\mathcal{C}$, then there always exists an $({m\choose t}q^t,q^{m-t}, q^{m-t}-(q-1)^tq^{m-2t}, q^m-q^{m-t})$ PDA which can realize a $({m\choose t}q^t,M,N)$ coded caching scheme with $\frac{M}{N}=1-(\frac{q-1}{q})^t$, subpacketization $F=q^{m-t}$ and transmission load $R=q^t-1$. \end{theorem}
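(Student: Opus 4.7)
The plan is to apply Construction~\ref{construction} with $\mathcal{F}$ taken to be the codeword set of the given $[m,m-t]_q$ MDS code $\mathcal{C}$ and column index set $\mathcal{K} = \binom{[0,m)}{t}\times[0,q)^t$. The discussion immediately preceding the statement already establishes that the row index matrix $\mathbf{F}$ is an OA$(m,q,m-t)$ and, under the hypothesis $m\geq 2t$, also an OA$_{q^{m-2t}}(m,q,t)$; hence Remark~\ref{remark-caching-OA} gives that $\mathbf{P}$ is a $(K,F,Z,S)$ PDA with $K = \binom{m}{t}q^t$, $F = q^{m-t}$, and $Z/F = 1-((q-1)/q)^t$, which yields $Z = q^{m-t}-(q-1)^t q^{m-2t}$ together with the claimed $M/N$. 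Only the value of $S$ (equivalently the transmission load $R=S/F$) remains to be pinned down.

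To determine $S$, I would first argue that each vector ${\bf e}\in\mathbb{F}_q^m$ can appear at most once in any single column: a non-star entry at row ${\bf f}$ in column $(\mathcal{T},{\bf b})$ carrying symbol ${\bf e}$ forces ${\bf f}|_{[0,m)\setminus\mathcal{T}} = {\bf e}|_{[0,m)\setminus\mathcal{T}}$, and since $[0,m)\setminus\mathcal{T}$ has size $m-t$ which is an information set of the MDS code, such ${\bf f}\in\mathcal{C}$ is uniquely determined. Consequently $n_{\bf e}\equiv 0$ everywhere and $S$ is simply the number of distinct ${\bf e}\in\mathbb{F}_q^m$ appearing in $\mathbf{P}$. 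Unwinding the rule further, ${\bf e}$ appears if and only if there exists ${\bf f}\in\mathcal{C}$ with $d({\bf e},{\bf f})=t$, in which case one takes $\mathcal{T}=\mathrm{supp}({\bf e}-{\bf f})$ and ${\bf b}={\bf e}|_\mathcal{T}$. Codewords themselves never appear (the only candidate ${\bf f}$ would be ${\bf e}$, giving distance $0$), so $S\leq q^m-q^{m-t}$, and equality will follow once the converse is shown: every non-codeword has a codeword at distance exactly $t$.

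This converse is the main obstacle and amounts to showing that every non-trivial coset of $\mathcal{C}$ contains a vector of weight exactly $t$. Lemma~\ref{lem-sphere} (with $s=m-t$) supplies only a codeword ${\bf c}_0$ at distance $w:=d({\bf e},{\bf c}_0)\leq t$; if $w=t$ we are done, so assume $w<t$. I would produce the required codeword as ${\bf c}_1={\bf c}_0+\tilde{\bf c}$, where $\tilde{\bf c}\in\mathcal{C}$ is a weight-$(t+1)$ codeword supported on a set $\mathcal{K}\supseteq\mathrm{supp}({\bf e}-{\bf c}_0)$; such $\tilde{\bf c}$ exist in abundance since the MDS code has $A_{t+1}=\binom{m}{t+1}(q-1)$ weight-$(t+1)$ codewords, with the codewords supported on each fixed $(t+1)$-subset forming a one-dimensional subspace, all of whose non-zero elements actually attain the support (else the minimum distance would be violated). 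A direct calculation gives $d({\bf e},{\bf c}_1)=t+1-b$, where $b$ is the number of positions in $\mathrm{supp}({\bf e}-{\bf c}_0)$ at which $\tilde{\bf c}$ cancels ${\bf e}-{\bf c}_0$, so the target $d({\bf e},{\bf c}_1)=t$ corresponds to exactly $b=1$ cancellation. Parameterizing $\tilde{\bf c}=\lambda{\bf c}^*_\mathcal{K}$ for a fixed generator ${\bf c}^*_\mathcal{K}$ of the one-dimensional subspace, the scalars producing cancellation at $i\in\mathrm{supp}({\bf e}-{\bf c}_0)$ are the $w$ ``critical ratios'' $\mu_i=({\bf e}-{\bf c}_0)_i/({\bf c}^*_\mathcal{K})_i$, so $b=1$ is achieved by choosing $\mathcal{K}$ so that some $\mu_{i^\star}$ is isolated among the $\mu_i$ and taking $\lambda=\mu_{i^\star}$. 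The hypothesis $m\geq 2t$ ensures that $[0,m)\setminus\mathrm{supp}({\bf e}-{\bf c}_0)$ contains at least $t+1-w$ elements, providing plenty of alternative choices for $\mathcal{K}\setminus\mathrm{supp}({\bf e}-{\bf c}_0)$; since for MDS codes (e.g.\ Reed--Solomon codes, where ${\bf c}^*_\mathcal{K}$ has the explicit form $({\bf c}^*_\mathcal{K})_i=\prod_{j\notin\mathcal{K}}(x_i-x_j)$) the ratios $({\bf c}^*_\mathcal{K})_i/({\bf c}^*_\mathcal{K})_{i'}$ vary non-trivially with $\mathcal{K}$, one can force the $\mu_i$ not all to coincide and extract an isolated value, iterating through different cancellation splits $(a,b)$ with $a+b=w+1$ to handle any pathological coincidences that arise for $w\geq 4$. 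Verifying this flexibility rigorously is the delicate technical heart of the proof; once done, $S=q^m-q^{m-t}$ and hence $R=q^t-1$ follow immediately.
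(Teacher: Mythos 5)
Your setup and bookkeeping coincide with the paper's: take $\mathcal{F}=\mathcal{C}$, invoke Remark~\ref{remark-caching-OA} for $K$, $F$, $Z$, observe that the complement of $\mathcal{T}$ is an information set so each ${\bf e}$ occurs at most once per column, and note that codewords never occur, so everything reduces to showing that every ${\bf e}\in\mathbb{F}_q^m\setminus\mathcal{C}$ has a codeword at distance \emph{exactly} $t$ (equivalently, every nonzero coset of $\mathcal{C}$ contains a word of weight exactly $t$). That reduction is correct and is exactly what the paper does.

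The problem is that you then leave precisely this crux unproved, and you say so yourself (``Verifying this flexibility rigorously is the delicate technical heart of the proof''). As written, the weight-$(t{+}1)$-codeword strategy does not close: the isolated-critical-ratio argument fails whenever every value among $\mu_i=({\bf e}-{\bf c}_0)_i/({\bf c}^*_{\mathcal{K}})_i$, $i\in\mathrm{supp}({\bf e}-{\bf c}_0)$, is repeated (already possible for $w=2$), and the proposed remedies --- ``vary $\mathcal{K}$'', ``iterate through different cancellation splits'' --- are not arguments. For $w=2$ one can in fact salvage it (two supports $\mathcal{K},\mathcal{K}'\supseteq\{i_1,i_2\}$ with $|\mathcal{K}\cap\mathcal{K}'|=t$ must give different ratios $c^*_{i_1}/c^*_{i_2}$, else their suitable combination is a nonzero codeword of weight $\le t$), but for general $w<t$ no such analysis is supplied, so the claim $S=q^m-q^{m-t}$, hence $R=q^t-1$, is not established. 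For comparison, the paper disposes of this step by a different one-line device: given ${\bf f}\in\mathcal{C}$ with $0<d({\bf e},{\bf f})<t$, it asserts a decomposition ${\bf e}=\alpha{\bf f}+\beta{\bf e}'$ with $d({\bf e}',{\bf f})=t$, which forces $d\bigl({\bf e},(\alpha+\beta){\bf f}\bigr)=t$ with $(\alpha+\beta){\bf f}\in\mathcal{C}$; note, however, that this amounts to finding a \emph{scalar multiple} of ${\bf f}$ at distance exactly $t$ from ${\bf e}$, and such a multiple need not exist (e.g.\ ${\bf f}$ of full weight and $d({\bf e},{\bf f})=1$ with $t=2$), so the paper's own justification of this step is also not airtight. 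Either way, your write-up needs an actual proof that every nonzero coset of an $[m,m-t]_q$ MDS code contains a vector of weight exactly $t$ (for instance via the fact that for each $\mathcal{T}\in\binom{[0,m)}{t}$ the coset has a unique member supported in $\mathcal{T}$, together with a count showing not all of these can have weight below $t$); without it the transmission-load claim is unproved.
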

\begin{proof}
Let $\mathcal{F}=\mathcal{C}$ and $\mathcal{K}={[0,m)\choose t}\times [0,q)^t$. From the above analysis, the array $\mathbf{P}$ generated from Construction \ref{construction} is a $(K,F,Z,S)$ PDA with $K={m\choose t}q^{t}$, $F=q^{m-t}$ and $Z/F=1-(\frac{q-1}{q})^{t}$. The rest is  to prove $R=q^{t}-1$, i.e., $S=FR=q^m-q^{m-t}$.

First we  prove that each vector in $\mathbb{F}^m_q \setminus \mathcal{C}$ appears in $\mathbf{P}$.   By Construction \ref{construction}, a vector ${\bf e}$ appears in $\mathbf{P}$ if and only if there exists a vector ${\bf f}\in \mathcal{C}$ satisfying $d({\bf e},{\bf f})=t$.   Since $\mathcal{C}$ is an $[m,m-t]_q$ MDS code,  by Lemma \ref{lem-sphere}, we have $\bigcup_{{\bf f}\in \mathcal{C}}\mathcal{E}_{\mathcal{C}}({\bf f},t)=\mathbb{F}^m_q$.  Hence, for any vector ${\bf e}\in \mathbb{F}^m_q \setminus \mathcal{C}$, there exists a vector ${\bf f}\in \mathcal{C}$ satisfying $0<d({\bf e},{\bf f})\leq t$. If $d({\bf e},{\bf f})=t$, then ${\bf e}$ appears in $\mathbf{P}$. If $0<d({\bf e},{\bf f})< t$, there exists a vector ${\bf e}'$ satisfying 1) $d({\bf e}',{\bf f})=t$ and 2) ${\bf e}$ is located on the line generated by ${\bf f}$ and ${\bf e}'$. Consequently we have ${\bf e}=\alpha {\bf f} + \beta {\bf e}'$ for some $\alpha,\beta\in \mathbb{F}_q\setminus\{0\}$. Then we have ${\bf e}=(\alpha+\beta){\bf f} +\beta({\bf e}'-{\bf f})$, which implies $d({\bf e},(\alpha+\beta){\bf f})=wt(\beta({\bf e}'-{\bf f}))=t$. Since $(\alpha+\beta){\bf f}\in \mathcal{C}$, ${\bf e}$ appears in $\mathbf{P}$.

Next, for each vector ${\bf e}$ occurring in $\mathbf{P}$, we show that the number of its occurrences in each column is at most once. Assume that ${\bf e}$ appears at $p_{{\bf f}_1,(\mathcal{T}, {\bf b})}$ and $p_{{\bf f}_2,(\mathcal{T}, {\bf b})}$, then we have ${\bf f}_1|_{[0,m)\setminus \mathcal{T}}={\bf e}|_{[0,m)\setminus \mathcal{T}}={\bf f}_2|_{[0,m)\setminus \mathcal{T}}$ by Construction \ref{construction}. Consequently,  we have ${\bf f}_1={\bf f}_2$, because ${\bf f}_1$ and ${\bf f}_2$ belong to the $[m,m-t]_q$ MDS code $\mathcal{C}$.

Therefore  $S=q^m-q^{m-t}$. The proof is complete.
\end{proof}

From Theorem \ref{th-s=m-1} we have an $({m\choose t}q^t,M,N)$ coded caching scheme, say Scheme1, with memory ratio $\frac{M}{N}=1-(\frac{q-1}{q})^{t}$, subpacketization $F_1=q^{m-1}$ and transmission load $R_1=(q-1)^t$. From Theorem \ref{th-reult-m-2} we have an $({m\choose t}q^t,M,N)$ coded caching scheme, say Scheme2, with memory ratio $\frac{M}{N}=1-(\frac{q-1}{q})^{t}$, subpacketization $F_2=q^{m-t}$ and transmission load $R_2=q^t-1$. Clearly Scheme1 has the same number of users $K$ and memory ratio $\frac{M}{N}$ as Scheme2. Moreover, $\frac{F_1}{F_2}=q^{t-1}$ and $\frac{R_1}{R_2} = \frac{(q-1)^t}{q^t -1}$. If $q$ is large enough,  then $\frac{R_1}{R_2}$ approximates $1$ and $\frac{F_1}{F_2}$ approximates infinity. This implies that  Scheme 2 has much smaller subpacketization and almost the same transmission load compared to Scheme1 for large $q$. For example,
we compare them in Table \ref{tab-compare} when $t=2$ and $m=10$, $20$, $30$ and $40$. From Table \ref{tab-compare}, we can see that for fixed $t$, the memory ratio $\frac{M}{N}$ decreases as $q$ increases. Furthermore,  the value of $\frac{R_1}{R_2}$ approximates $1$  and  the value of $\frac{F_1}{F_2}$  increases fast as $q$ increases.
\begin{table}[http!]
\center
\caption{The comparisons of schemes from Theorem \ref{th-s=m-1} and Theorem \ref{th-reult-m-2}\label{tab-compare}}
\small{
\begin{tabular}{|c|c|c|c|c|c|c|}
\hline
$m$ &$q$ &$K$ &$\frac{M}{N}$   &Rate ratio $\frac{R_1}{R_2}$   &subpacketization ratio $\frac{F_1}{F_2}$    \\ \hline
10& 11&    5445&0.1736         &0.9091                         &11\\ \hline
20& 23& 100510 &0.0930         &0.9167                         &23\\ \hline
30& 31&  418035&0.0635         &0.9375                         &31\\ \hline
40& 41&1311180 &0.0482         &0.9524                         &41\\ \hline
\end{tabular} }
\end{table}
\section{Conclusion}
\label{conclusion}
In this paper, we introduced a  framework to construct PDA and thus coded caching schemes. Based on this framework, we only need to choose the row index matrix and the column index set appropriately in order to design a coded caching scheme. We demonstrated our approach by  providing a generalization of several previously known schemes constructed in \cite{MN,SZG}.
When the number of users is $K={m\choose t}q^t$ for any positive integers $m$, $q$ and $t$ with $t\leq m$, we showed that the row index matrix must be an orthogonal array  if each user has the same memory size.  On the other hand,  when the coded gain is ${m\choose t}$, which is the largest coded gain under this framework,  the row index matrix must be a covering array. Consequently lower bounds on the transmission load $R$ and the subpacketization $F$ were derived.  In particular, we showed that the scheme in \cite{YCTC} achieves these two lower bounds.  Furthermore, We  demonstrated our approach by producing new  explicit schemes with smaller subpacketization than some previously known schemes.
Using a trivial orthogonal array as the row index matrix, we constructed a new scheme, which has the same number of users, memory ratio and transmission load as the scheme in \cite{SZG}, while its subpacketization is smaller.
Similarly,  based on a MDS code, we constructed another new scheme with the same number of users, memory ratio and almost the same transmission load (under certain conditions) as the scheme in \cite{SZG}, while its subpacketization is much smaller. Finally we remark that all the row vectors of $\mathcal{F}$ considered in this paper are all distinct. However, we can also consider the case where some row vectors are repeated using our construction.

%we constructed another class of PDAs which given a class of schemes having the same transmission load as the scheme in \cite{SZG} for the same user number and memory ratio.

%class of PDAs can be obtained, which gives a class of schemes having close transmission load as the schemes in \cite{SZG} for the same user number and memory ratio. Then
%Then the relationship between coded caching schemes and orthogonal arrays was discussed when the number of users is $K={m\choose t}q^t$ for any positive integers $m$, $q$ and $t$ with $t\leq m$. Furthermore when the coded gain is the largest, i.e., $g={m\choose t}$, the relationship between coded caching schemes and covering arrays was derived. Consequently a lower bound on the subpacketization $F$ was derived.

% According to this lower bound, we showed that the scheme in \cite{YCTC} has the minimal subpacketization.

%\section*{Acknowledgment}
%
%The authors are very grateful to the reviewers and the Associate Editor, Prof. Lawrence Ong, for their valuable comments
%that improved the quality and presentation of this paper.

% Can use something like this to put references on a page
% by themselves when using endfloat and the captionsoff option.
\ifCLASSOPTIONcaptionsoff
  \newpage
\fi

% trigger a \newpage just before the given reference
% number - used to balance the columns on the last page
% adjust value as needed - may need to be readjusted if
% the document is modified later
%\IEEEtriggeratref{8}
% The "triggered" command can be changed if desired:
%\IEEEtriggercmd{\enlargethispage{-5in}}

% references section

% can use a bibliography generated by BibTeX as a .bbl file
% BibTeX documentation can be easily obtained at:
% http://mirror.ctan.org/biblio/bibtex/contrib/doc/
% The IEEEtran BibTeX style support page is at:
% http://www.michaelshell.org/tex/ieeetran/bibtex/
%\bibliographystyle{IEEEtran}
% argument is your BibTeX string definitions and bibliography database(s)
%\bibliography{IEEEabrv,../bib/paper}
%
% <OR> manually copy in the resultant .bbl file
% set second argument of \begin to the number of references
% (used to reserve space for the reference number labels box)
\bibliographystyle{IEEEtran}

\end{document}